\documentclass[twoside,leqno,twocolumn]{article}
\usepackage[nocopyright,siamsections,siamtitle,smallbib]{jeffeproc}

\topmargin -24pt
\headheight 12pt
\headsep 17pt
\parskip 1pt plus 1pt minus 1pt
\parindent 1.5em
\textheight 52.5pc  \advance\textheight by \topskip
\textwidth 41pc
\setlength{\oddsidemargin}{-0.875pc}
\setlength{\evensidemargin}{-0.875pc}

\usepackage{jeffe,graphicx,hyperref}
\usepackage[utf8]{inputenc}
\usepackage[margin=1in]{geometry}
\usepackage{marvosym}
\usepackage{picins}
\usepackage{smallbib}
\usepackage{cite}
\usepackage{verbatim}

\usepackage{microtype}
\usepackage[charter]{mathdesign}

\usepackage[mathcal]{euscript}
\usepackage{stmaryrd}

\def\arcto{\mathord\shortrightarrow}

\def\rev{\mathit{rev}}
\def\Z{\mathbb{Z}}
\def\Real{\mathbb{R}}

\def\snip{\mathbin{\raisebox{0.15ex}{\rotatebox[origin=c]{60}{\Rightscissors}\!}}}

\let\path p

\newtheorem{theorem}{Theorem}[section]
\newtheorem{corollary}[theorem]{Corollary}
\newtheorem{lemma}[theorem]{Lemma}

\pagestyle{empty}

\begin{document}

\title{Shortest Non-trivial Cycles in Directed and Undirected Surface Graphs}

\author{
  Kyle Fox%
  \thanks{Department of Computer Science,
      University of Illinois, Urbana-Champaign;
      \url{kylefox2@illinois.edu}. Portions of this work were done while the author was
				visiting Google, Inc.  Research
				supported in part by the Department of Energy Office
				of Science Graduate Fellowship Program (DOE SCGF),
				made possible in part by the American Recovery and
				Reinvestment Act of 2009, administered by ORISE-ORAU
				under contract no. DE-AC05-06OR23100.}
}


\maketitle

\begin{abstract}
  Let~$G$ be a graph embedded on a surface of genus~$g$
  with~$b$ boundary cycles.
  We describe algorithms to compute multiple types of non-trivial cycles in~$G$, using different techniques depending on whether or not $G$ is an undirected graph.
  If $G$ is undirected, then we give an algorithm to compute a shortest non-separating cycle in $2^{O(g)} n \log \log n$ time. Similar algorithms are given to compute a shortest non-contractible or non-null-homologous cycle in $2^{O(g + b)} n \log \log n$ time. Our algorithms for undirected $G$ combine an algorithm of Kutz with known techniques for efficiently enumerating homotopy classes of curves that may be shortest non-trivial cycles.
  
  Our main technical contributions in this work arise from assuming $G$ is a \emph{directed} graph with possibly asymmetric edge weights.
  For this case, we give an algorithm to compute a shortest non-contractible cycle in~$G$ in ${O((g^3 + gb)n \log n)}$ time.
  In order to achieve this time bound, we use a restriction
  of the infinite cyclic cover that may be useful in other contexts.
  We also describe an algorithm to compute a shortest
  non-null-homologous cycle in~$G$ in ${O((g^2+gb) n \log n)}$ time,
  extending a known algorithm of Erickson
  to compute a shortest non-separating cycle.
  In both the undirected and directed cases, our algorithms improve the
  best time bounds known for many values
  of~$g$ and~$b$.
\end{abstract}

\noindent

\thispagestyle{empty}
\setcounter{page}{0}



\section{Introduction}
There is a long line of work on computing shortest non-trivial cycles in surface
embedded graphs.
Cabello and Mohar~\cite{cm-fsnsn-07} claim that finding short
non-trivial cycles is arguably one of the most natural problems for graphs
embedded on a surface. Additionally, finding these cycles has
many benefits both for theoretical combinatorial
problems~\cite{mt-gs-01,dhm-aacd-07,km-gmiap-08,bdt-ptass-08}
and more practical applications in areas such as
graphics and graph drawing
\cite{eh-ocsd-04,bls-rrgh-09,is-pebgg-07,kr-ccnlt-07,gw-tnr-01,whds-reti-04}.

\begin{figure*}[t]
  \centering
  \includegraphics[height=1.5in]{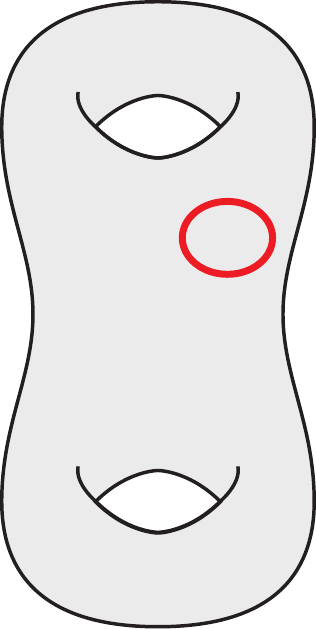}
  \qquad\qquad\qquad
  \includegraphics[height=1.5in]{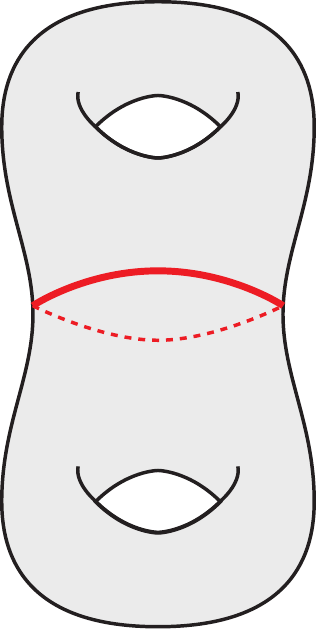}
  \qquad\quad\qquad
  \includegraphics[height=1.5in]{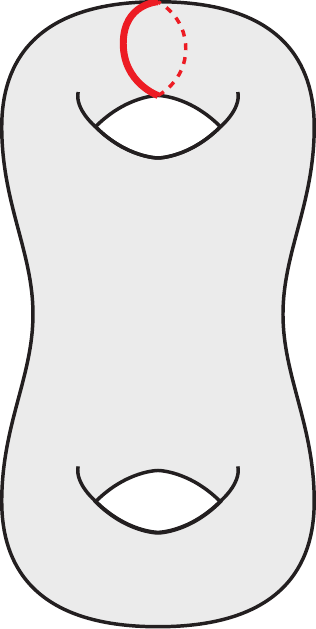}
  \caption{Left: A contractible cycle on~$\Sigma$. Center: A non-contractible but separating cycle on~$\Sigma$. Right: A non-contractible and non-separating cycle on~$\Sigma$.}
  \label{fig:cycle-types}
\end{figure*}

Researchers have focused primarily on finding short non-contractible
and non-separating cycles.
Consider a graph~$G$ embedded on a surface~$\Sigma$.
Informally,
a cycle in~$G$ is
\EMPH{contractible} if its image on~$\Sigma$ can be continuously deformed
to a single point. The cycle is \EMPH{separating} if removing its image
from~$\Sigma$ disconnects~$\Sigma$. Every non-separating cycle 
is non-contractible, but there may be non-contractible cycles that are
separating. See Figure~\ref{fig:cycle-types} for examples.

The history of non-trivial cycles in undirected graphs goes back several years
to a result of Itai and Shiloach~\cite{is-mfpn-79}. They give an~$O(n^2 \log n)$
time algorithm to find a shortest non-trivial cycle in an annulus as a
subroutine for computing minimum $s,t$-cuts in planar graphs. Their result
has seen several improvements, most recently by
Italiano \etal~\cite{r-mstcp-83,f-faspp-87,insw-iamcmf-11}.

Thomassen~\cite{t-egnsn-90} gave the first efficient algorithm for computing
non-trivial cycles on surfaces with arbitrary genus. His algorithm runs
in~$O(n^3)$ time and relies on a property of certain families of cycles
known as the \EMPH{3-path condition};
see also Mohar and Thomassen~\cite[Chapter 4]{mt-gs-01}. Erickson and
Har-Peled~\cite{eh-ocsd-04} gave an~$O(n^2 \log n)$ time algorithm, which
remains the fastest known for graphs of arbitrary genus.
Cabello and Mohar~\cite{cm-fsnsn-07} gave the first results parameterized
by genus, and Kutz~\cite{k-csnco-06} showed it is possible to to find
short non-trivial cycles in time near-linear in the number of vertices
if we allow an exponential dependence on the genus.
Kutz's algorithm requires searching~$g^{O(g)}$ subsets of the
\emph{universal cover}.
Cabello, Chambers,
and Erickson~\cite{cc-msspg-07,cce-msspe-12} later showed the near-linear
time dependence is possible with only a polynomial dependence on the genus
by avoiding use of the universal cover.
The current best
running time in terms of the number of vertices
is~$g^{O(g)}n \log \log n$ due to a modification to Kutz's algorithm by Italiano
\etal~\cite{insw-iamcmf-11}.
For other results
related to finding interesting cycles on surfaces,
see~\cite{c-fscss-10,ccl-osaew-10,ccl-fctpe-10,cdem-fotc-10,ccelw-scsih-08,cen-mcshc-09,ew-csec-10,efn-gmcse-12}.

Unfortunately, all of the above results rely on properties that exist only
in \emph{undirected} graphs; shortest paths intersect at most once (assuming
uniqueness), and the reversal of any shortest path is a shortest path.
Due to the difficulty in avoiding these assumptions, there are few results
for finding shortest non-trivial cycles in directed surface graphs, and all
of these results are relatively recent.
Befittingly, the short history of these results appears to coincide nicely
with the history given above for undirected graphs.

Janiga and Koubek~\cite{jk-mcdpn-92} gave the first near-linear time algorithm
for computing a shortest non-trivial cycle in a directed graph embedded on
an annulus as an attempt to find minimum $s,t$-cuts in planar
graphs\footnote{Unfortunately, their minimum cut algorithm has a subtle
error~\cite{kn-mcupg-11} which may lead to an incorrect result when the
minimum $t,s$-cut is smaller than the minimum $s,t$-cut.}. Their result
can also be achieved using recent maximum flow algorithms for
planar graphs~\cite{bk-amfdp-09,e-mfpsp-10,w-mstfp-97}.

Cabello, Colin de Verdi\`ere, and Lazarus~\cite{ccl-fsncd-10}
gave the first efficient algorithms for
computing shortest non-trivial cycles in directed surface graphs of arbitrary
genus. Their algorithms run in~$O(n^2 \log n)$ time
and~$O(\sqrt{g}n^{3/2}\log n)$ time, and rely on a variant of
the 3-path condition and balanced separators, respectively.
Erickson and Nayyeri~\cite{en-mcsnc-11} gave a~$2^{O(g)}n \log n$ time algorithm
for computing the
shortest \emph{non-separating} cycle that relies on computing
the shortest cycle in each of~$2^{O(g)}$ \emph{homology classes}.
The latest results for these problems are two algorithms of
Erickson~\cite{e-sncds-11}. The first algorithm
computes shortest non-separating cycles
in~$O(g^2 n \log n)$ time by computing shortest paths in several
copies of a linear sized covering space.
The second algorithm computes shortest non-contractible
cycles (which may be separating)
in~$g^{O(g)} n \log n$ time in a manner similar to Kutz's
algorithm~\cite{k-csnco-06},
by lifting the graph to a finite
(but large) subset of the universal cover.

\subsection{Our results}

In both the undirected and directed graph settings, researchers presented
near-quadratic time algorithms
for computing non-separating and non-contractible
cycles, and others supplemented them with algorithms with
exponential dependence in the genus, but near-linear dependence in the
complexity of the embedded graph. Similar trends appear in the computation
of maximum flows and minimum cuts in surface embedded
graphs
\cite{st-dsdt-83,gt-namfp-88,cen-hfcc-09,cen-mcshc-09,en-mcsnc-11,efn-gmcse-12}.
For the problems mentioned in this paragraph, we ideally would like
algorithms with a near-linear dependency on graph complexity but
only a polynomial dependence on genus.
Of course, we are also interested in pushing down the dependence on graph complexity even if it means sacrificing a bit in the genus dependency when $g$ is sufficiently small.

Our first result is improved algorithms for computing non-trivial cycles in \emph{undirected} surface graphs. Our algorithms run in $2^{O(g)} n \log \log n$ time and can be used to find shortest non-separating, non-contractible, or \emph{non-null-homologous} cycles. Informally, a \EMPH{non-null-homologous}
cycle is one that is either non-separating or separates a pair of boundary
cycles.
These algorithms improve the running times achieved by Italiano \etal~\cite{insw-iamcmf-11} for finding shortest non-separating and non-contractible cycles and show that it is possible to take advantage of the universal cover as in Kutz's algorithm in order to minimize the dependency on $n$, without searching a super-exponential in $g$ number of subsets of the covering space. For surfaces with $b$ boundary cycles, the shortest non-contractible and non-null-homologous cycle algorithms run in time $2^{O(g + b)}n \log \log n$, while the shortest non-separating cycle algorithm continues to run in $2^{O(g)}n \log \log n$ time. The main idea behind these algorithms is to construct fewer subsets of the universal cover by only constructing subsets corresponding to certain weighted triangulations of a \emph{dualized polygonal schema} as in~\cite{ccelw-scsih-08,cen-mcshc-09}. These algorithms
are described in Section~\ref{sec:undirected}.

Next, we sketch an
algorithm to compute a shortest non-null-homologous cycle in a \emph{directed} surface graph
in~$O((g^2+gb)n \log n)$ time.
This algorithm is actually a straightforward extension to
Erickson's algorithm for computing shortest non-separating
cycles~\cite{e-sncds-11}, but we must work out some non-trivial details
for the sake of completeness.
The key change to Erickson's algorithm is that we compute shortest paths
in an additional~$O(b)$ copies of a covering space defined using shortest paths
between boundary cycles.
These additional computations will find a shortest non-null-homologous cycle
if all shortest non-null-homologous cycles are separating.
This algorithm is given in Section~\ref{sec:non-null-homologous}.
Along with being an interesting result in its own right, we use this algorithm
as a subroutine for our primary result described below.

Our final, primary, and most technically interesting result is an~$O(g^3 n \log n)$ time algorithm for computing
shortest non-contractible cycles in directed surface graphs, improving the result of
Erickson~\cite{e-sncds-11} for all positive~$g$ and showing it is possible
to have near-linear dependency in graph complexity without suffering an
exponential dependency on genus.
On a surface
with~$b$ boundary cycles, our algorithm runs
in~$O((g^3 + gb) n \log n)$ time. In order to achieve this
running time, we choose to forgo using a subset of the universal cover in
favor of subsets of a different
covering space known as the infinite cyclic cover.
If any shortest non-contractible cycle is non-separating, then the
algorithm of Erickson~\cite{e-sncds-11} will find such a cycle
in~$O(g^2 n \log n)$ time. On the other hand, if any shortest
non-contractible cycle is separating, then it will lift to a
non-null-homologous cycle in the subset of the infinite cyclic
cover if we lift the graph to the covering space
in the correct way.
Our description of the infinite cyclic cover and its properties appears in Sections~\ref{sec:cover} and~\ref{sec:lifting}.
The algorithm is given in section~\ref{sec:non-contractible-algorithm}.

We give preliminary material useful to understanding each of these results in Section~\ref{sec:prelims}. However,
each of the results mentioned above is described in its relevant section or sections independently of the other results.
We believe that all the algorithms mentioned above are of practical and technical interest
in their own right. Additionally, the techniques used, particularly for the directed graph algorithms,
may be useful in other
contexts such as more efficiently computing minimum cuts or maximum
flows in surface embedded graphs.

\section{Preliminaries}
\label{sec:prelims}

We begin by recalling several useful definitions related to surface-embedded graphs.  For further background, we refer the reader to Gross and Tucker \cite{gt-tgt-01} or Mohar and Thomassen~\cite{mt-gs-01} for topological graph theory, and to Hatcher~\cite{h-at-02} or Stillwell~\cite{s-ctcgt-93} for surface topology and homology.
We adopt the presentation
of our terminology and notation directly from previous works~\cite{cen-mcshc-09,en-mcsnc-11,en-sncwp-11,e-sncds-11,efn-gmcse-12}.

\subsection{Surfaces and Curves}

A \EMPH{surface} (more formally, a \emph{2-manifold with boundary}) is a compact Hausdorff space in which every point has an open neighborhood homeomorphic to either the plane $\Real^2$ or a closed halfplane $\set{(x,y)\in \Real^2\mid x\ge 0}$.  The points with halfplane neighborhoods make up the \EMPH{boundary} of the surface; every component of the boundary is homeomorphic to a circle.
A surface is \EMPH{non-orientable} if it contains a subset homeomorphic to
the M\"obius band, and \EMPH{orientable} otherwise. For this paper, we consider only compact, connected, orientable surfaces. 

A \EMPH{path} in a surface $\Sigma$ is a continuous function $p\colon [0,1]\to\Sigma$.
A \EMPH{loop} is a path whose endpoints~$p(0)$ and~$p(1)$ coincide;
we refer to this common endpoint as the \EMPH{basepoint} of the loop.
An \EMPH{arc} is a path internally disjoint from the boundary of~$\Sigma$
whose endpoints lie on the boundary of $\Sigma$.
A \EMPH{cycle} is a continuous function $\gamma\colon S^1\to\Sigma$;
the only difference between a cycle and a loop is that a loop has a
distinguished basepoint.
We say a loop~$\ell$ and a cycle~$\gamma$ are \EMPH{equivalent} if, for some
real number~$\delta$, we have~$\ell(t) = \gamma(t + \delta)$ for
all~$t \in [0,1]$.
We collectively refer to paths, loops, arcs, and cycles as \EMPH{curves}.
A curve is \EMPH{simple} if it is injective; we usually do not distinguish between simple curves and their images in $\Sigma$.
A simple curve~$p$ is \EMPH{separating} if~$\Sigma \setminus p$ is disconnected.

The \EMPH{reversal}~$\rev(p)$ of a path~$p$ is defined by
setting~$\rev(p)(t) = p(1-t)$. The \EMPH{concatenation}~$p \cdot q$ of two
paths~$p$ and~$q$ with~$p(1)=q(0)$ is the path created by
setting~$(p\cdot q)(t) = p(2t)$ for all~$t \leq 1/2$
and~$(p\cdot q)(t) = q(2t-1)$ for all~$t \geq 1/2$. Finally, let~$p[x,y]$
denote the subpath of a path~$p$ from point~$x$ to point~$y$.

The \EMPH{genus} of a surface $\Sigma$ is the maximum number of disjoint simple cycles in $\Sigma$ whose complement is connected.
 Up to homeomorphism,
there is exactly one such surface with any genus $g\ge 0$ and any number of
boundary cycles $b\ge 0$; the \EMPH{Euler characteristic}~$\chi$ of this
surface is~$\chi := 2 - 2g - b$.

\subsection{Graph Embeddings}

An \EMPH{embedding} of an undirected graph $G$ on a surface $\Sigma$ maps vertices to distinct points and edges to simple, interior-disjoint paths.  The \EMPH{faces} of the embedding are maximal connected subsets of $\Sigma$ that are disjoint from the image of the graph.  An embedding is \EMPH{cellular} if each of its faces is homeomorphic to the plane; in particular, in any cellular embedding, each component of the boundary of $\Sigma$ must be covered by a cycle of edges in $G$.  Euler's formula implies that any cellularly embedded graph with $n$ vertices, $m$ edges, and $f$ faces lies on a surface with Euler characteristic $\chi = n-m+f$, which implies that $m = O(n+g)$ and $f=O(n+g)$
if the graph is simple.
We consider only such
cellular embeddings of genus $g=O(\sqrt{n})$, so that the overall complexity of the embedding is $O(n)$.

Any cellular embedding in an orientable surface can be encoded combinatorially
by a \EMPH{rotation system}, which records the counterclockwise order of edges
incident to each vertex.
Two paths or cycles in a combinatorial surface \EMPH{cross} if no continuous infinitesimal perturbation makes them disjoint; if such a perturbation exists, then the paths are \EMPH{non-crossing}.

We redundantly use the term \EMPH{arc} to refer to a walk in the graph whose endpoints are boundary vertices.  Likewise, we use the term \EMPH{cycle} to refer to a closed walk in the graph. \EMPH{Cutting} a combinatorial surface along a cycle or  arc modifies both the surface and the embedded graph.  For any combinatorial surface $S = (\Sigma, G)$ and any simple cycle or arc $\gamma$ in~$G$, we define a new combinatorial surface \EMPH{$S \snip \gamma$} by taking the topological closure of $\Sigma \backslash \gamma$ as the new underlying surface; the new embedded graph contains two copies of each vertex and edge of $\gamma$, each bordering a new boundary.

Any undirected graph~$G$ embedded on a surface~$\Sigma$ without boundary has a
\EMPH{dual graph}~$G^*$, which has a vertex~$f^*$ for each face~$f$ of~$G$,
and an edge~$e^*$ for each edge~$e$ in~$G$ joining the vertices dual to the
faces of~$G$ that~$e$ separates. The dual graph~$G^*$ has a natural cellular
embedding in~$\Sigma$, whose faces corresponds to the vertices of~$G$.
For any subgraph~$F = (U, D)$ of~$G=(V,E)$, we write~$G \setminus F$
to denote the edge-complement~$(V,E \setminus D)$. We also abuse notation
by writing~$F^*$ to denote the subgraph of~$G^*$ corresponding to any
subgraph~$F$ of~$G$.

A \EMPH{tree-cotree decomposition}~$(T,L,C)$ of an undirected
graph~$G$ embedded on a surface without boundary 
is a partition of the edges into three disjoint subsets;
a spanning tree~$T$ of~$G$, a spanning cotree~$C$ (the dual of a spanning
tree~$C^*$ of~$G^*$), and leftover edges~$L=G \setminus (T \cup C)$.
Euler's formula implies that in any tree-cotree decomposition, the set~$L$
contains exactly~$2g$ edges~\cite{e-dgteg-03}. The definitions for dual
graphs and tree-cotree decompositions given above
extend to surfaces with boundary, but
we do not require these extensions in this paper.

For some of the problems we consider, the input is actually a \emph{directed}
edge-weighted graph~$G$ with a cellular embedding on some surface. We use the
notation~$u \arcto v$ to denote the directed edge from vertex~$u$ to vertex~$v$.
Without loss of generality, we consider only \EMPH{symmetric} directed graphs,
in which the reversal~$v \arcto u$ of any edge~$u \arcto v$ is another edge,
possibly with infinite weight. We also assume that in the cellular embedding,
the images of any edge in~$G$ and its reversal coincide (but with opposite
orientations).
Thus, like Cabello \etal~\cite{ccl-fsncd-10} and Erickson~\cite{e-sncds-11},
we implicitly model directed graphs as \emph{undirected graphs with asymmetric
edge weights}. Duality can be extended to directed graphs~\cite{cen-hfcc-09},
but the results in this paper do not require this extension.

Let~$p = v_0 \arcto v_1 \arcto \dots \arcto v_k$ be a simple directed cycle
or arc in an
embedded graph~$G$. We say an edge~$u \arcto v_i$ \EMPH{enters~$p$
from the left}
(resp. right) if the vertices~$v_{i-1}$,~$u$, and~$v_{i+1}$ (module~$k$
in the case of a cycle) are ordered
clockwise (resp. counterclockwise) around~$v_i$, according to the
embedding's rotation system. An edge~$v_i \arcto u$ \EMPH{leaves~$p$ from the
left} (resp. right) if its reversal~$u \arcto v_i$ enters~$p$ from the
left (resp. right).
If~$p$ is an arc,
the above definitions require that~$0 < i < k$ and that~$u$
is not a vertex in~$p$. Recall an arc's endpoints lie on boundary cycles.
Let~$t_0 v_0$ and~$v_0 w_0$ be the boundary edges
incident to~$v_0$ with vertices~$t_0$, $v_1$, and $w_0$ appearing in clockwise
order around~$v_0$. We say~$t_0 \arcto v_0$ enters~$p$ from the left.
We say~$w_0 \arcto v_0$ enters~$p$
from the right. Similarly,
if~$t_k v_k$ and~$v_k w_k$ are boundary edges incident to~$v_k$ with
vertices~$t_k$, $w_k$, and~$v_{k-1}$ appearing in clockwise order around~$v_k$,
we say~$t_k \arcto v_k$ enters~$p$ from the left and~$w_k \arcto v_k$
enters~$p$ from the right. Finally, we treat~$t_0$ as~$v_{-1}$ and~$t_k$
as~$v_{k+1}$ to define entering from the left (resp. right) for any
other edges~$u \arcto v_0$ or~$u \arcto v_k$ where~$u$ does not appear in~$p$.

To simplify our presentation and analysis, we assume that any two
vertices~$x$ and~$y$ in~$G$ are connected by a unique shortest directed
path, denoted~$\sigma(x,y)$. The Isolation Lemma~\cite{mvv-memi-87} implies
that this assumption can be enforced (with high probability) by perturbing
the edge weights with random infinitesimal values~\cite{eh-ocsd-04}.

Our algorithms rely on a result by Cabello \etal~\cite{cc-msspg-07,cce-msspe-12} which generalizes
a result of Klein~\cite{k-msspp-05} for planar graphs.
\begin{lemma}[Cabello \etal~\cite{cc-msspg-07,cce-msspe-12}]
  \label{lem:mssp}
  Let~$G$ be a directed graph with non-negative edge weights, cellularly
  embedded on a surface~$\Sigma$ of genus~$g$, and let~$f$ be an arbitrary
  face of~$G$. We can preprocess~$G$ in~$O(gn \log n)$ time\footnote{The
  published version of this algorithm~\cite{cc-msspg-07} has a weaker time
  bound of~$O(g^2n \log n)$. Using the published
  version increases the running time
  of our algorithms for directed graphs by a factor of~$g$.} and~$O(n)$ space, so that the length
  of any shortest path from any vertex incident to~$f$ to any other vertex
  can be retrieved in~$O(\log n)$ time.
\end{lemma}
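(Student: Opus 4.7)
The plan is to extend Klein's multiple-source shortest paths algorithm~\cite{k-msspp-05} from the planar case to surfaces of genus~$g$, following Cabello \etal~\cite{cc-msspg-07,cce-msspe-12}. First I would compute a single shortest path tree $T_0$ rooted at some fixed vertex~$v_0$ incident to~$f$ using Dijkstra's algorithm in $O(n \log n)$ time; this is valid because the assumption $g = O(\sqrt n)$ forces $m = O(n+g) = O(n)$. Then I would ``pivot'' the root around the boundary of~$f$, producing a sequence of shortest-path trees $T_0, T_1, \dots, T_{k-1}$, one for each vertex on~$f$'s boundary. Each transition $T_{i-1}\to T_i$ is realized by a sequence of elementary swaps: some non-tree edge $e$ is inserted into the current tree, creating a fundamental cycle, and another edge on that cycle is deleted.

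The central combinatorial claim is that the total number of swaps over the entire traversal of~$f$'s boundary is $O(gn)$. In the planar case, Klein proves that each edge is inserted into and deleted from the tree only $O(1)$ times in total, using the fact that any two shortest paths in a planar graph cross at most once. For positive genus, the same kind of argument can be recovered by cutting~$\Sigma$ along an $O(g)$-complexity cut graph: each shortest path meets the cut graph $O(g)$ times, so after cutting, planar-style non-crossing behavior holds on each piece at the cost of an $O(g)$ multiplicative overhead. To answer queries afterwards I would maintain the current tree in a persistent balanced-BST representation (or a persistent link-cut tree) keyed on the pivot parameter, so that each swap costs $O(\log n)$ amortized time and the length of the unique tree path from $v_i$ to any vertex $u$ can be retrieved in $O(\log n)$ time by navigating the appropriate version. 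With careful delta-encoding of the version history, the total space stays $O(n)$, and the total preprocessing time is $O(n \log n) + O(gn) \cdot O(\log n) = O(gn \log n)$.

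The hardest part is the higher-genus analogue of Klein's ``shortest paths cross at most once'' lemma: two simple directed shortest paths in a surface of genus~$g$ may genuinely cross $\Omega(g)$ times, so the planar charging scheme does not apply verbatim. One recovers it by lifting shortest paths to a simply connected subset of the universal cover (equivalently, to~$\Sigma$ cut open along a cut graph) where the planar crossing bound holds, and then charging the extra intersections to the $O(g)$ generators of the cut graph. Everything after this combinatorial bound---the pivoting scheme, the persistent data structure, and the $O(\log n)$-time queries---is a direct adaptation of Klein's framework.
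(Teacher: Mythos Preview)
This lemma is not proved in the paper at all; it is quoted as a black-box result of Cabello, Chambers, and Erickson and used only through its statement. There is therefore no ``paper's own proof'' to compare your proposal against.

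That said, your sketch is a fair high-level summary of the cited algorithm---start from one Dijkstra tree, pivot the root around~$\partial f$, maintain the evolving tree in a persistent structure so that each of the $O(gn)$ elementary swaps costs $O(\log n)$---but the justification you give for the $O(gn)$ swap bound is not quite right. A shortest path in a genus-$g$ surface need not meet a fixed $O(g)$-size cut graph only $O(g)$ times, so ``cut open and apply the planar crossing bound on each piece'' does not directly yield the claimed charging. The actual argument in~\cite{cce-msspe-12} bounds, for each directed edge, the number of times it can enter and leave the evolving shortest-path tree as the source moves around~$f$, via a genus-aware refinement of Klein's interleaving/monotonicity argument rather than a literal reduction to planar pieces. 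If you intend to supply a proof rather than a citation, that is the step that needs to be made precise.
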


\subsection{Homotopy and Homology}

Two paths~$p$ and~$q$ in $\Sigma$ are \EMPH{homotopic} if one can be
continuously deformed into the other without changing their endpoints.
More formally, a \EMPH{homotopy} between~$p$ and~$q$ is a
continuous map $h\colon {[0,1]\times [0,1] \to \Sigma}$ such that $h(0,\cdot) = p$, $h(1,\cdot) = q$, $h(\cdot, 0)=p(0)=q(0)$, and $h(\cdot,1)=p(1)=q(1)$.
Homotopy defines an equivalence relation over the set of paths with any
fixed pair of endpoints. The set of homotopy classes of loops in~$\Sigma$
with basepoint~$x_0$ defines a group~$\pi_i(\Sigma,x_0)$ under concatenation,
called the \EMPH{fundamental group} of~$\Sigma$. (For all basepoints~$x_0$
and~$x_1$, the groups~$\pi_i(\Sigma,x_0)$ and~$\pi_i(\Sigma,x_1)$ are
isomorphic.) A cycle is \EMPH{contractible} if it is homotopic to a constant
map.

Homology is a coarser equivalence relation than homotopy, with nicer algebraic properties. Like several earlier papers
\cite{cf-qhc2-07,cen-mcshc-09, en-mcsnc-11,e-sncds-11,efn-gmcse-12},
we consider only one-dimensional cellular homology with coefficients in the finite field $\Z_2$.

Fix a cellular embedding of an undirected graph $G$ on a surface $\Sigma$
with genus $g$ and~$b$ boundary cycles.
An \EMPH{even subgraph} is a subgraph of $G$ in which every node has even degree, or equivalently, the union of edge-disjoint cycles.  An even subgraph is \EMPH{null-homologous} if it is the boundary of the closure of the union of a subset of faces of $G$.
Two even subgraphs $\eta$ and $\eta'$ are \EMPH{homologous}, or in the same \EMPH{homology class}, if their symmetric difference
$\eta\oplus\eta'$ is null-homologous.
The set of all homology classes of even subgraphs defines the
\EMPH{first homology group} of~$\Sigma$, which is isomorphic to the finite
vector space~$(\Z_2)^{2g + \max\Set{b - 1,0}}$.
If~$b \leq 1$, then a simple cycle~$\gamma$ is separating
if and only if it is null-homologous; however, when~$b>1$, some separating
cycles are not null-homologous.

\subsection{Covering spaces}

A continuous map~$\pi : \Sigma' \to \Sigma$ between two surfaces is called a
\EMPH{covering map} if each point~$x \in \Sigma$ lies in an open
neighborhood~$U$ such that (1)~$\pi^{-1}(U)$ is a countable union of disjoint
open sets~$U_1 \cup U_2 \cup \cdots$ and (2) for each~$i$, the
restriction~$\pi |_{U_i} : U_i \to U$ is a homeomorphism. If there is a covering
map~$\pi$ from~$\Sigma'$ to~$\Sigma$, we call~$\Sigma'$ a \EMPH{covering space}
of~$\Sigma$. The \EMPH{universal cover}~$\tilde{\Sigma}$ is the unique
simply-connected covering space of~$\Sigma$ (up to homeomorphism).
The universal cover is so named because it covers every path-connected
covering space of~$\Sigma$.

For any path~$p : [0,1] \to \Sigma$ such that~$\pi(x') = p(0)$ for some
point~$x' \in \Sigma'$, there is a unique path~$p'$ in~$\Sigma'$, called
a \EMPH{lift} of~$p$, such that~$p'(0) = x'$ and~$\pi \circ p' = p$. We also
say that~$p$ \EMPH{lifts} to~$p'$. Conversely, for any path~$p'$ in~$\Sigma'$,
the path~$\pi \circ p'$ is called a \EMPH{projection} of~$p'$.

We define a lift of a cycle~$\gamma : S^1 \to \Sigma$ to be the infinite
path~$\gamma' : \R \to \Sigma'$ such that
${\pi(\gamma'(t)) = \gamma(t \mod 1)}$
for all real~$t$. We call the path obtained by restricting~$\gamma'$ to any
unit interval a \EMPH{single-period lift} of~$\gamma$; equivalently, a
single-period lift of~$\gamma$ is a lift of any loop equivalent to~$\gamma$.
We informally say that a cycle is the~\EMPH{projection} of any of its
single-period lifts.


\section{Non-trivial Cycles in Undirected Graphs}
\label{sec:undirected}

Let~$G$ be an undirected graph with non-negative edge weights, cellularly embedded on an orientable surface~$\Sigma$ of genus~$g$. 
We sketch an algorithm to compute a shortest non-separating, non-contractible, or non-null-homologous cycle in~$G$. We assume the surface has no boundary, and consider the case with boundary at the end of this section.
Recall any shortest non-null-homologous cycle is a shortest non-separating cycle in a surface without boundary.

We begin by reviewing Kutz's~\cite{k-csnco-06} algorithm for computing shortest non-trivial cycles. He begins by computing a \emph{greedy system of loops}~$\Lambda = \Set{\lambda_1,\lambda_2,\dots,\lambda_{2g}}$ using a construction of Erickson and Whittlesey~\cite{ew-gohhg-05}. The construction can be performed in $O(gn)$ time using our assumption that $g = O(\sqrt{n})$~\cite{k-csnco-06}. The surface $D = \Sigma \setminus \Lambda$ is a topological disk with each loop $\lambda_i \in \Lambda$ appearing twice upon its boundary.  See Figure~\ref{fig:triangulations}. Kutz argues that there exists some shortest non-trivial cycle $\gamma$ that meets three criteria: (1) $\gamma$ crosses each loop $\lambda_i$ at most twice~\cite[Lemma 1]{k-csnco-06}; (2) the crossing sequence of $\gamma$ with regards to the loops contains no \emph{curls}; there is never any instance where $\gamma$ crosses a loop $\lambda_i$ from left-to-right (right-to-left) only to immediately cross again right-to-left (left-to-right)~\cite[Lemma 3]{k-csnco-06}; and (3) $\gamma$ is simple. 
Given a cycle $\gamma$, there exists a sequence of crossings between $\gamma$ and the loops of $\Lambda$.
Kutz uses the above observations to find shortest cycles corresponding to $g^{O(g)}$ crossing sequences of length $O(g)$ where at least one of the crossing sequences corresponds to a shortest non-trivial cycle. For each crossing sequence~$X$, he describes how to determine if a cycle corresponding to $X$ meets the criteria above and, if so, how to find a shortest cycle corresponding to $X$  in $O(g n \log n)$ time using an algorithm of Colin de Verdi\'ere and Erickson~\cite{ce-tnpcs-10}. Italiano \etal~\cite{insw-iamcmf-11} later improved the running time of Colin de Verdi\'ere and Erickson's algorithm  to $O(n \log \log n)$. The final running time for Kutz's algorithm with the modification by Italiano \etal\ is therefore $g^{O(g)} n \log \log n$.

\begin{figure*}[t]
  \centering
  \includegraphics[height=1.5in]{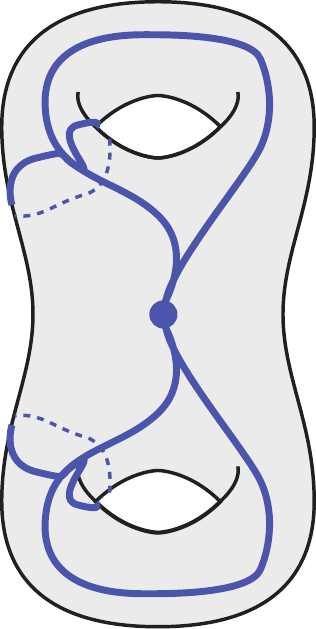}
  \qquad\qquad
  \includegraphics[height=1.5in]{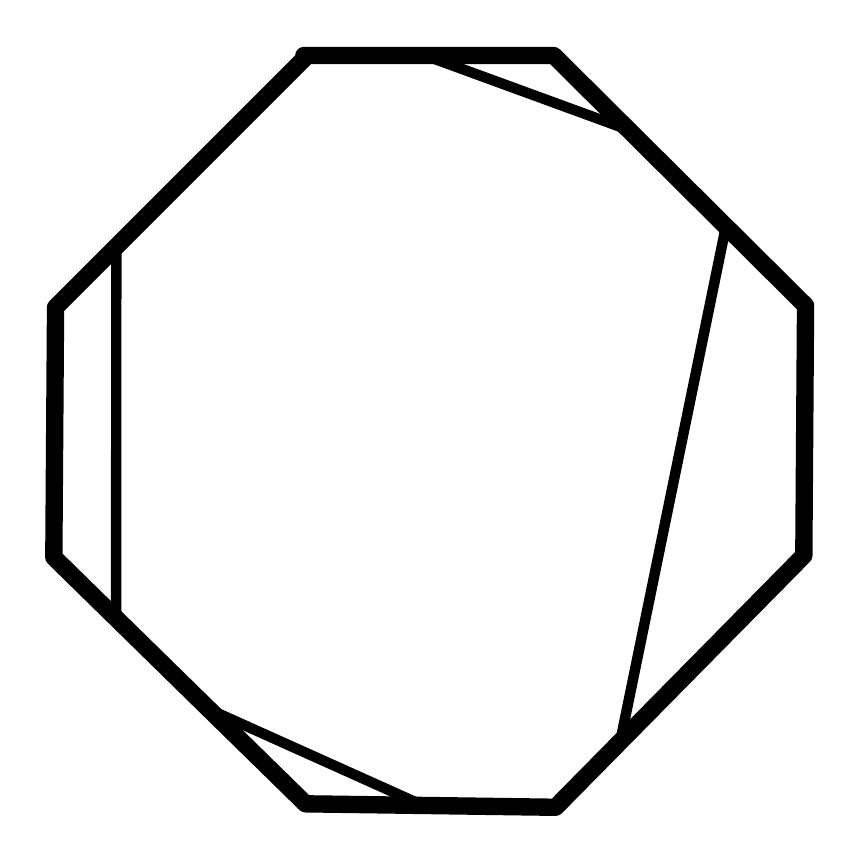}
  \qquad\quad
  \includegraphics[height=1.5in]{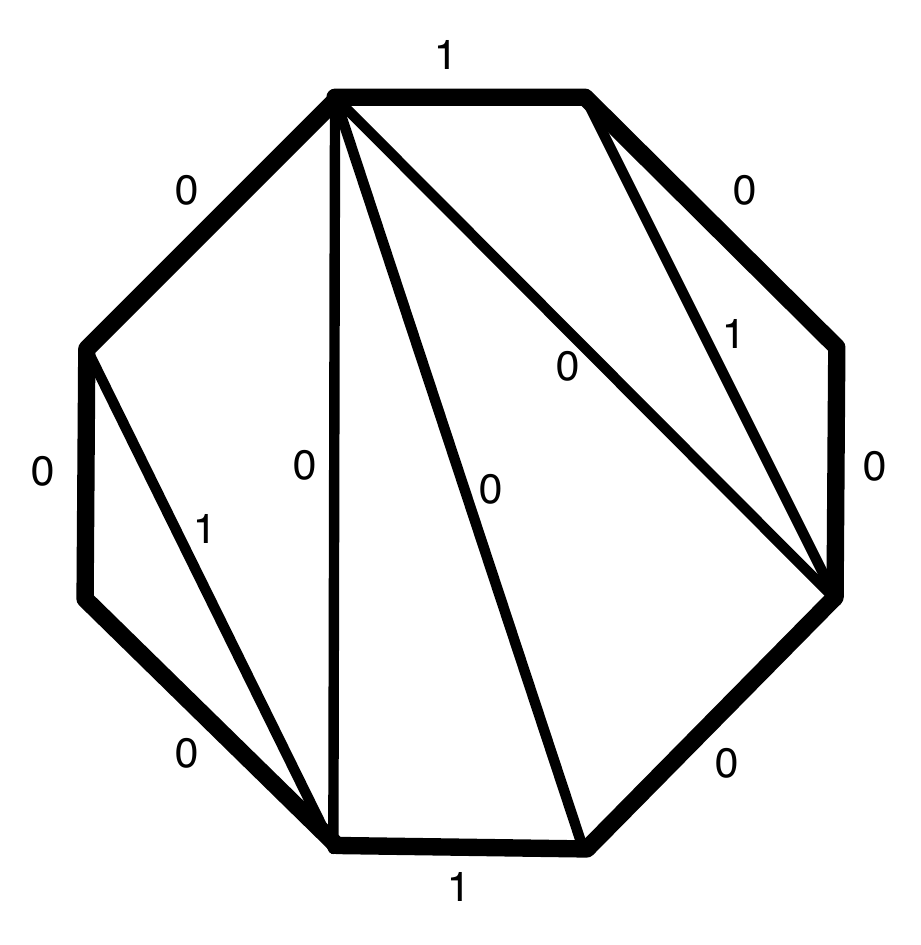}
  \caption{Left: A system of $4$ loops $\Lambda$ on $\Sigma$. Center: Arcs crossing a polygonal schema.
  Right: The weighted triangulation of the dualized schema.}
  \label{fig:triangulations}
\end{figure*}

In order to improve the running time, we show how to reduce the number of crossing sequences that need to be considered by Kutz's algorithm using a similar strategy to that seen in~\cite{ccelw-scsih-08,cen-mcshc-09}.
As mentioned, the greedy system of loops $\Lambda$ used by Kutz cuts the surface into a topological disk~$D$. By replacing each loop in $\Lambda$ with a single edge in~$D$, we transform $D$ into an \emph{abstract polygonal schema}. Each loop of $\Lambda$ corresponds to two edges of the polygon. 
Any non-self-crossing cycle $\gamma$ in $\Sigma$ is cut into arcs by the polygon where an arc exists between two edges if $\gamma$ consecutively crosses the corresponding loops of~$\Lambda$.
We dualize the polygonal schema by replacing each edge with a vertex and each vertex with an edge. Cycle $\gamma$ now corresponds to a \emph{weighted triangulation} of the dualized polygonal schema where each pair of consecutive crossings by $\gamma$ between loops of $\Lambda$ is represented by an edge between the corresponding vertices. Each edge of the triangulation receives a weight equal to the number of times $\gamma$ performs the corresponding consecutive crossings. Some shortest non-trivial cycle crosses each member of $\Lambda$ at most twice, so the edge weights on its triangulation are all between $0$ and $2$.

Our algorithm for computing a shortest non-trivial cycle in $G$ enumerates all weighted triangulations of the dualized polygonal schema with weights between $0$ and $2$ by brute force. There are $2^{O(g)}$ weighted triangulations considered. For each triangulation, the algorithm then checks if it corresponds to a single cycle in $O(g)$ time by brute force. If the triangulation does correspond to a single cycle, then its crossing sequence is calculated. The algorithm uses Italiano \etal's~\cite{insw-iamcmf-11} modification to Kutz's~\cite{k-csnco-06} algorithm to determine if the crossing sequence meets the aforementioned criteria and, if so, to calculate a shortest cycle corresponding to that crossing sequence. Our algorithm will eventually return a shortest cycle corresponding to the correct crossing sequence for some shortest non-trivial cycle. The overall running time is $2^{O(g)} n \log \log n$.

\subsection{Surfaces with Boundary}
We now extend the above algorithm to work on surfaces with boundary.
For computing a shortest non-separating cycle, we reduce to the case without boundary by pasting disks into each of the boundary components. This transformation does not change the set of non-separating cycles. Our algorithm still runs in time $2^{O(g)}n \log \log n$.

In order to compute a shortest non-contractible cycle, we use a \emph{greedy system of $O(g + b)$ arcs}~\cite{ccelw-scsih-08,c-scgsp-10,ce-tnpcs-10,en-mcsnc-11,e-sncds-11} instead of a greedy system of loops. The necessary properties of the greedy system (the shortest non-contractible cycle being simple, crossing each arc at most twice, and being curl-free) are easily shown using the same proofs given in~\cite{k-csnco-06}. We still use a dualized polygonal schema, except it now has $O(g + b)$ vertices, and our algorithm must enumerate $2^{O(g + b)}$ weighted triangulations. The rest of the details are essentially the same. The overall running time is $2^{O(g + b)} n \log \log n$.

Finally, we can compute a shortest non-null-homologous cycle by slightly modifying the algorithm for finding a shortest non-contractible cycle. The only difference is we ignore weighted triangulations where the corresponding crossing sequence does not correspond to a non-null-homologous cycle. Testing if a crossing sequence corresponds to a non-null-homologous cycle can be done using techniques shown in~\cite{cen-mcshc-09}.

With these extensions to surfaces with boundary, we get the following theorem.

\begin{theorem}
A shortest non-separating cycle in an undirected graph embedded on an orientable surface of genus~$g$ with~$b$ boundary cycles can be computed in $2^{O(g)}n \log \log n$ time. Further, a shortest non-contractible or non-null-homologous cycle can be computed in $2^{O(g+b)}n \log \log n$ time.
\end{theorem}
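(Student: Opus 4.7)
The plan is to follow the sketch given above and package each of the three bounds as an invocation of Kutz's framework, but with the space of candidate crossing sequences pruned down to the triangulations of a dualized polygonal schema whose edge weights lie in $\{0,1,2\}$.

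For the non-separating bound, I would first reduce to the case of a surface without boundary by pasting an open disk into each boundary component, which does not alter the set of non-separating cycles nor increase the asymptotic complexity. On the resulting closed surface I would construct a greedy system of loops $\Lambda=\{\lambda_1,\dots,\lambda_{2g}\}$ in $O(gn)$ time via Erickson--Whittlesey, cut along $\Lambda$ to obtain a topological disk, and replace each $\lambda_i$ by a single edge to obtain a polygonal schema with $4g$ edges. Dualizing this schema gives a $4g$-gon whose triangulations (with weights in $\{0,1,2\}$ on each diagonal) encode precisely the crossing patterns of candidate non-trivial cycles that satisfy Kutz's three conditions. There are only $2^{O(g)}$ such weighted triangulations, since a triangulation of a $4g$-gon has $O(g)$ diagonals and each may take one of three weights; I would enumerate them all by brute force. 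For each triangulation I would (i) check in $O(g)$ time whether it corresponds to a single cycle, (ii) read off the crossing sequence, and (iii) invoke the $O(n\log\log n)$ algorithm of Italiano et al.\ (the sharpened form of Colin de Verdi\`ere--Erickson) to compute the shortest cycle in $G$ realizing that crossing sequence. Returning the minimum over all $2^{O(g)}$ triangulations gives a shortest non-separating cycle in $2^{O(g)}n\log\log n$ time.

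For the non-contractible bound, closing up the boundary is no longer permissible: the $b$ boundary cycles of $\Sigma$ contribute additional homotopy classes that I must allow my triangulations to see. I would instead use a greedy system of $O(g+b)$ arcs connecting the boundary components, as developed in the references cited in the sketch, so that cutting $\Sigma$ along this system again yields a single topological disk. Kutz's three properties (simplicity, at most two crossings per arc, absence of curls) transfer verbatim to this setting; the proofs in \cite{k-csnco-06} carry through because they depend only on the fact that $\Sigma$ minus the system is a disk and on local exchange arguments along a putative shortest cycle. The dualized schema now has $O(g+b)$ vertices, giving $2^{O(g+b)}$ weighted triangulations to enumerate. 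Everything else in the pipeline is unchanged, so the running time is $2^{O(g+b)}n\log\log n$.

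For the non-null-homologous bound, I would apply the same pipeline as for non-contractible cycles, but add a filtering step that discards any weighted triangulation whose crossing sequence corresponds to a null-homologous cycle. The homology class of a cycle is determined by the parity of its crossings with a basis of cocycles, so this test can be performed in $O(g+b)$ time per triangulation using the dualized schema together with the techniques of \cite{cen-mcshc-09}, which does not affect the asymptotic running time.

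The main obstacle is verifying that Kutz's three structural properties, and in particular the ``at most two crossings'' bound, remain valid when $\Lambda$ is replaced by a greedy system of arcs rather than loops, and that the candidate crossing patterns really do correspond bijectively to weighted triangulations of the dualized schema with edge weights in $\{0,1,2\}$. Once this correspondence is established, the enumeration bound $2^{O(g+b)}$ is immediate from the fact that the dual polygon has $O(g+b)$ sides and a triangulation has $O(g+b)$ diagonals each carrying three possible weights, and the per-triangulation cost is dominated by the $O(n\log\log n)$ invocation of the Italiano et al.\ subroutine.
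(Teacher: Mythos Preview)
Your proposal is correct and follows essentially the same approach as the paper: reduce the non-separating case to a closed surface by pasting disks, cut along a greedy system of loops (or, for the non-contractible and non-null-homologous cases, a greedy system of $O(g+b)$ arcs), enumerate the $2^{O(g)}$ (resp.\ $2^{O(g+b)}$) weighted triangulations of the dualized polygonal schema with edge weights in $\{0,1,2\}$, and for each one apply the $O(n\log\log n)$ routine of Italiano \etal; the non-null-homologous case adds exactly the homology-class filter you describe, using the techniques of~\cite{cen-mcshc-09}. The only point the paper handles more lightly than you do is the extension of Kutz's three structural lemmas to systems of arcs, which it simply asserts goes through unchanged.
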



\section{Shortest Non-null-homologous Cycles in Directed Graphs}
\label{sec:non-null-homologous}
Now let~$G$ be a symmetric directed graph with non-negative edge weights, cellularly
embedded on an orientable surface~$\Sigma$ of genus~$g$ with~$b$ boundary
cycles. 
We continue by giving an overview of an algorithm to compute a shortest cycle
in~$G$ that is not null-homologous.
 
In~\cite{e-sncds-11}, Erickson describes a system $\Lambda=\Set{\lambda_1,\lambda_2,\dots,\lambda_{2g}}$ of $2g$ non-separating cycles where each cycle
$\lambda_i$ is composed of two shortest paths in $G$ along with an extra edge. We actually describe and use this construction explicitly in Section~\ref{sec:lifting}. For each cycle $\lambda_i \in \Lambda$,
Erickson gives an $O(g n \log n)$ time algorithm to find a shortest cycle that crosses $\lambda_i$ an odd number of times. Any non-separating cycle must cross at least one member of $\Lambda$ an odd number of times, so an $O(g^2 n \log n)$ time algorithm for finding a shortest non-separating cycle follows immediately.

In a similar vain, we claim it is possible to compute in $O(g n \log n)$ time a shortest cycle crossing any non-separating \emph{arc} $\lambda$ an odd number of times assuming $\lambda$ is a shortest path.
Our algorithm for finding a shortest non-null-homologous cycle begins by calling Erickson's algorithm as a subroutine in case any shortest non-null-homologous cycles are non-separating.
We then perform the following steps
in case all the shortest non-null-homologous cycles are separating.
Arbitrarily label the boundary
cycles of~$G$ as~$B_0,B_1,\dots,B_{b-1}$.
Let~$s$ be an arbitrary vertex on~$B_0$.
We compute the shortest path tree~$T$ from~$s$ using Dijkstra's algorithm
in~$O(n \log n)$ time.
For each index~$i \geq 1$, let~$\lambda_i$ be a shortest directed path in~$T$
from~$B_0$ to~$B_i$ that contains exactly one vertex from each
boundary cycle~$B_0$ and~$B_i$.
Let $\Lambda = \Set{\lambda_1,\lambda_2,\dots,\lambda_{b-1}}$ be the set
of shortest paths computed above. Each path must be non-separating as it
connects two distinct boundary cycles. We can easily compute~$\Lambda$
in~$O(bn)$ time once we have the shortest path tree~$T$.
If a shortest non-null-homologous cycle is separating, then it must
separate~$B_0$ from some other boundary cycle~$B_i$ with~$i\geq1$.

\begin{lemma}
  \label{lem:odd-crossings}
  If a simple cycle~$\gamma$ separates boundary cycle~$B_0$ from a different
  boundary cycle~$B_i$, then~$\lambda_i$ crosses~$\gamma$ an odd number of
  times.
\end{lemma}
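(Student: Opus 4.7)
The plan is a mod-$2$ parity argument based on which side of $\gamma$ each point of $\lambda_i$ lies on. Because $\gamma$ is a simple separating cycle, $\Sigma \setminus \gamma$ has exactly two connected components; by the hypothesis that $\gamma$ separates $B_0$ from $B_i$, one of them---call it $\Sigma_0$---contains $B_0$, and the other---call it $\Sigma_i$---contains $B_i$. The arc $\lambda_i$ has one endpoint on $B_0$ and the other on $B_i$, so its two endpoints lie in distinct components of $\Sigma \setminus \gamma$.

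Next I would appeal to the combinatorial crossing definition given in the preliminaries: the crossing number of $\lambda_i$ with $\gamma$ equals the number of transverse intersection points of generic perturbations $\lambda_i'$ and $\gamma'$ of the two curves, where the perturbation is taken to fix the endpoints of $\lambda_i$ on $B_0$ and $B_i$. After this perturbation, $\lambda_i'$ is a continuous curve in $\Sigma$ starting in $\Sigma_0$, ending in $\Sigma_i$, and meeting $\gamma'$ only transversely at finitely many points. Tracing $\lambda_i'$ from start to finish, each transverse crossing switches which component of $\Sigma \setminus \gamma'$ contains the current point, so the total number of crossings has the same parity as the number of side switches. Since we start in $\Sigma_0$ and end in $\Sigma_i$, this parity is odd.

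The main obstacle is the bookkeeping at places where $\lambda_i$ and $\gamma$ share vertices or edges in $G$, since the crossing count is defined via the rotation system rather than via literal transverse intersection. At each such shared structure, one must perform a local check that the rotation-system definition of crossing agrees in parity with the geometric side-switch count produced by the perturbation---essentially that edges traversed in common contribute an even count and that vertex-only intersections contribute according to the interleaving of the two cyclic edge orders. Once this local verification is in hand, the global Jordan-style parity count described above applies verbatim and yields the claim.
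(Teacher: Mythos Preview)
Your proposal is correct and follows essentially the same parity argument as the paper: split $\Sigma\setminus\gamma$ into two components, observe that the endpoints of $\lambda_i$ lie in different components, and conclude that the number of side-switches---hence crossings---is odd. The paper's own proof is just two sentences and does not spell out the rotation-system bookkeeping you flag in your third paragraph; your version is more careful on that point but otherwise identical in spirit.
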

\begin{proof}
  Cycle~$\gamma$ separates~$\Sigma$ into two components~$A$ and~$B$ containing
  boundary cycles~$B_0$ and~$B_i$ respectively.
  Arc~$\lambda$ must cross~$\gamma$ from~$A$ to~$B$ one more time than
  it crosses from~$B$ to~$A$.
  Therefore,~$\lambda$ crosses an odd number of times.
\end{proof}
Lemma~\ref{lem:odd-crossings} implies that any shortest
non-null-homologous cycle~$\gamma$ crosses some arc~$\lambda_i$ an odd number of
times if~$\gamma$ is separating.

All that remains is to
present a slightly modified lemma of Erickson~\cite[Lemma 3.4]{e-sncds-11}.
\begin{lemma}
  \label{lem:use-cyclic}
  Let~$\lambda$ be any arc in~$\Lambda$.
  The shortest cycle~$\gamma$ that crosses~$\lambda$ an odd number of times
  can be computed in~$O(g n \log n)$ time.
\end{lemma}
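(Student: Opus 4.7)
Plan: This statement is a slight variant of Erickson's Lemma~3.4~\cite{e-sncds-11}, and I would follow his proof, modified to handle an arc $\lambda$ with endpoints on two distinct boundary cycles rather than a non-separating closed curve. The high-level strategy is to reduce the odd-crossing condition to a shortest-path problem in a $\mathbb{Z}_2$ double cover of $\Sigma$ and then apply Lemma~\ref{lem:mssp}.

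First I would construct the cover. Let $\Sigma' = \Sigma \snip \lambda$; cutting splits each vertex and each edge of $\lambda$ into a ``left'' and a ``right'' copy on the merged boundary that now joins $B_0$ and $B_i$. Take two disjoint copies $\Sigma'_1$ and $\Sigma'_2$, with embedded graphs $G'_1$ and $G'_2$, and glue them by identifying the left-$\lambda$ in $\Sigma'_1$ with the right-$\lambda$ in $\Sigma'_2$, and vice versa. The result is an orientable surface $\hat{\Sigma}$ of genus $O(g)$ supporting a symmetric directed graph $\hat{G}$ of total complexity $O(n)$, together with a natural two-sheeted covering projection $\pi\colon\hat{\Sigma}\to\Sigma$. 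A standard parity argument gives the correspondence: a directed cycle $\gamma$ in $G$ crosses $\lambda$ an odd number of times if and only if each lift of $\gamma$ to $\hat{G}$ is a directed path from one preimage $v_1$ of some vertex $v$ to its other preimage $v_2$. Consequently the length of the desired shortest cycle equals $\min_v d_{\hat{G}}(v_1,v_2)$.

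To evaluate this minimum in $O(gn\log n)$ time, I would apply Lemma~\ref{lem:mssp} to $\hat{G}$. The preimage $\hat{\lambda} := \pi^{-1}(\lambda)$ is a single cycle in $\hat{G}$, and the two preimages of every vertex of $\lambda$ lie on $\hat{\lambda}$. Because $\lambda$ is itself a shortest directed path in $G$, Erickson's argument adapts to show that a suitable face $\hat{f}$ incident to $\hat{\lambda}$ can be chosen so that the minimizing pair $(v_1,v_2)$ has $v_1$ incident to $\hat{f}$; MSSP preprocessing then costs $O(gn\log n)$, each of the $O(n)$ distance queries costs $O(\log n)$, and the cycle itself can be recovered from the MSSP data structure within the same budget.

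The main obstacle is the face selection in the last step: verifying that a single MSSP source face $\hat{f}$ in $\hat{G}$ is enough to capture every candidate minimizer. In Erickson's original argument the corresponding $\lambda_i$ is the concatenation of two shortest paths joined by an extra edge, whose special structure pins the relevant face down cleanly; for an arc with endpoints on two different boundary cycles, one must re-examine the face structure along the glued copies of $\lambda$ to confirm that the analogous choice still works. Once this is established, the $O(gn\log n)$ time bound is immediate from Lemma~\ref{lem:mssp}.
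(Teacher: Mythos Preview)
Your proposal is correct and follows essentially the same approach as the paper: build the cyclic double cover (Appendix~\ref{app:double-cover} spells out the arc version explicitly), reduce via Lemma~\ref{lem:lift-to-double-cover} to computing $\min_{s\in\lambda} d_{\hat G}((s,0),(s,1))$, and invoke a single multiple-source shortest-path call (Lemma~\ref{lem:mssp}). One small inaccuracy: $\pi^{-1}(\lambda)$ is not a single cycle but two disjoint arcs $(\lambda,0)$ and $(\lambda,1)$ in the cover; the sources $(s,0)$ all lie on $(\lambda,0)$. The face-selection ``obstacle'' you flag is precisely the detail the paper defers to Erickson's original argument, and it is resolved in the same way there as for the cycle case.
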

The proof remains essentially unchanged for our version of the lemma.
In short, we compute the cyclic double cover~$\Sigma_\lambda^2$ as described in Appendix~\ref{app:double-cover}.
The lift of~$G = (V,E)$ to~$\Sigma_\lambda^2$ contains the vertex set~$V \times \Set{0,1}$.
Lemma~\ref{lem:lift-to-double-cover} implies that~$\gamma$ lifts to a shortest
path from~$(s,0)$ to~$(s,1)$, for some vertex~$s$ of~$\lambda$.
We can compute this path using a single
multiple-source shortest path computation
in~$O(g n \log n)$ time (Lemma~\ref{lem:mssp}).

Applying Lemma~\ref{lem:use-cyclic} to each arc~$\lambda \in \Lambda$
and comparing the results to the shortest non-separating cycle found
by Erickson's algorithm, we immediately get Theorem~\ref{thm:non-null-homologous}.

\begin{theorem}
\label{thm:non-null-homologous}
  A shortest non-null-homologous cycle in a directed graph embedded on an
  orientable surface of genus~$g$ with~$b$ boundary cycles can be computed
  in~$O((g^2+gb) n \log n)$ time.
\end{theorem}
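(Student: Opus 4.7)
The plan is to combine the algorithm already available (Erickson's shortest non-separating cycle algorithm) with an additional batch of computations that handle the case where every shortest non-null-homologous cycle is separating. Erickson's routine runs in $O(g^2 n \log n)$ time and returns a shortest non-separating cycle, which is automatically non-null-homologous. So it suffices to also produce, in $O(gb n \log n)$ total time, a shortest non-null-homologous cycle under the assumption that all shortest such cycles are separating, and return the minimum of all candidates.

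First, I would build the system $\Lambda = \{\lambda_1, \ldots, \lambda_{b-1}\}$ of boundary-to-boundary arcs exactly as described in the excerpt: pick a vertex $s$ on $B_0$, run Dijkstra from $s$ in $O(n\log n)$ time, and for each $i\ge 1$ read off from the shortest-path tree $T$ a shortest $B_0$-to-$B_i$ arc $\lambda_i$ (truncated so that its only vertices on the boundary are its endpoints). These $b-1$ arcs can be extracted in $O(bn)$ total time. Because $\lambda_i$ connects two distinct boundary components, it is non-separating, so it qualifies as input to Lemma~\ref{lem:use-cyclic}.

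Next, assuming that a shortest non-null-homologous cycle $\gamma$ is separating, I argue it must separate $B_0$ from some $B_i$: if $\gamma$ separated no pair of boundary cycles, then $\Sigma\setminus\gamma$ would have a component containing every boundary, hence $\gamma$ would bound (up to adding boundary cycles of one side) and would be null-homologous, a contradiction. Thus Lemma~\ref{lem:odd-crossings} applies and $\lambda_i$ crosses $\gamma$ an odd number of times. I then invoke Lemma~\ref{lem:use-cyclic} on every arc in $\Lambda$, each call costing $O(gn \log n)$ via a single multiple-source shortest-path computation in the cyclic double cover $\Sigma_{\lambda_i}^2$; the minimum-length cycle returned by any call, if it crosses $\lambda_i$ an odd number of times, is itself non-null-homologous (by Lemma~\ref{lem:odd-crossings} read in reverse together with the fact that odd crossing parity with a non-separating arc obstructs null-homology). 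This step contributes $O(gbn \log n)$ in total.

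Finally, I take the shorter of Erickson's non-separating candidate and the best arc-odd-crossing candidate over $\Lambda$; one of the two must be a shortest non-null-homologous cycle, because if some shortest such cycle is non-separating then Erickson's algorithm finds one of length no greater, and if every shortest such cycle is separating then the argument above guarantees one of the $b-1$ odd-crossing computations returns such a cycle. The only step requiring real care is verifying that an odd crossing count with a shortest-path arc $\lambda_i$ lifts to an $(s,0)$-to-$(s,1)$ path in the cyclic double cover $\Sigma_{\lambda_i}^2$ and vice versa; this is precisely the content of the cited Lemma~\ref{lem:lift-to-double-cover}, and the rest of the proof then follows in the same manner as Erickson's Lemma~3.4, with $\lambda_i$ playing the role his non-separating cycle played there. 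Summing the time bounds yields the claimed $O((g^2+gb) n \log n)$.
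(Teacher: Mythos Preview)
Your proposal is correct and follows essentially the same approach as the paper: run Erickson's non-separating algorithm, then build the $b-1$ shortest-path arcs $\lambda_i$ from $B_0$ to each $B_i$, invoke Lemma~\ref{lem:odd-crossings} and Lemma~\ref{lem:use-cyclic} on each arc, and return the overall minimum. You are in fact slightly more careful than the paper in explicitly arguing why a separating non-null-homologous cycle must separate $B_0$ from some $B_i$ and why the odd-crossing candidate is itself non-null-homologous.
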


\section{The Infinite Cyclic Cover}
\label{sec:cover}

As in the previous section, let~$G$ be a symmetric directed graph with
non-negative edge weights, cellularly
embedded on an orientable surface~$\Sigma$ of genus~$g$ with~$b$ boundary
cycles.
We begin to describe
our algorithm for computing a shortest non-contractible cycle in~$G$.
Our job is easy if any shortest non-contractible cycle is non-null-homologous;
we can just run the
algorithm given in Section~\ref{sec:non-null-homologous}
in $O((g^2+gb) n \log n)$ time.
We must work harder, though, to find a shortest
non-contractible cycle~$\gamma$ if
every shortest non-contractible cycle is
null-homologous.
Our high-level strategy is to construct~$O(g)$
subsets of a
covering space we call the infinite cyclic cover.
In Lemma~\ref{lem:non-null-homologous-lift},
we show at least one of the subsets contains a non-null-homologous
cycle that projects to~$\gamma$.

Let~$\lambda$ be an arbitrary simple non-separating cycle in~$\Sigma$.
We define the covering space~$\Sigma_\lambda$, which we call the
\EMPH{infinite cyclic cover}\footnote{Named for the infinite cyclic group.},
as follows. Cutting the surface~$\Sigma$ along~$\lambda$ gives us a new
surface~$\Sigma'$ with~$b+2$ boundary cycles where two of the boundary cycles
are copies of~$\lambda$ denoted~$\lambda^+$ and~$\lambda^-$. The
infinite cyclic cover is
obtained by pasting together an infinite number of copies of~$\Sigma'$ along
corresponding boundary cycles~$\lambda^\pm$. Specifically, we have a
copy~$(\Sigma', i)$ of~$\Sigma'$ for each integer~$i$.
Let~$(\lambda^+,i)$ and~$(\lambda^-,i)$
denote copies of~$\lambda^+$ and~$\lambda^-$ in~$(\Sigma',i)$.
The infinite cyclic cover
is defined by identifying~$(\lambda^+,i)$ and~$(\lambda^-,i+1)$
for every~$i$. Any graph~$G$ cellularly embedded on~$\Sigma$ lifts to an
infinite graph~$G_\lambda$ embedded in~$\Sigma_\lambda$. Note that for
any pair of simple non-separating cycles~$\lambda$ and~$\mu$, the infinite
cyclic covers~$\Sigma_\lambda$ and~$\Sigma_\mu$ are homeomorphic, but the
lifted graphs~$G_\lambda$ and~$G_\mu$ may not be isomorphic.

\begin{figure*}[t]
  \centering
  \includegraphics[height=1.5in]{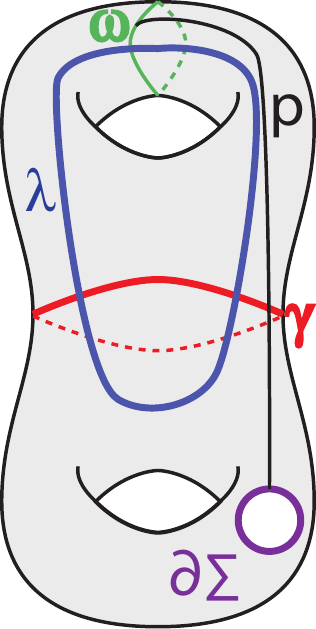}\\
  \vspace{0.3in}
  \includegraphics[height=1.5in]{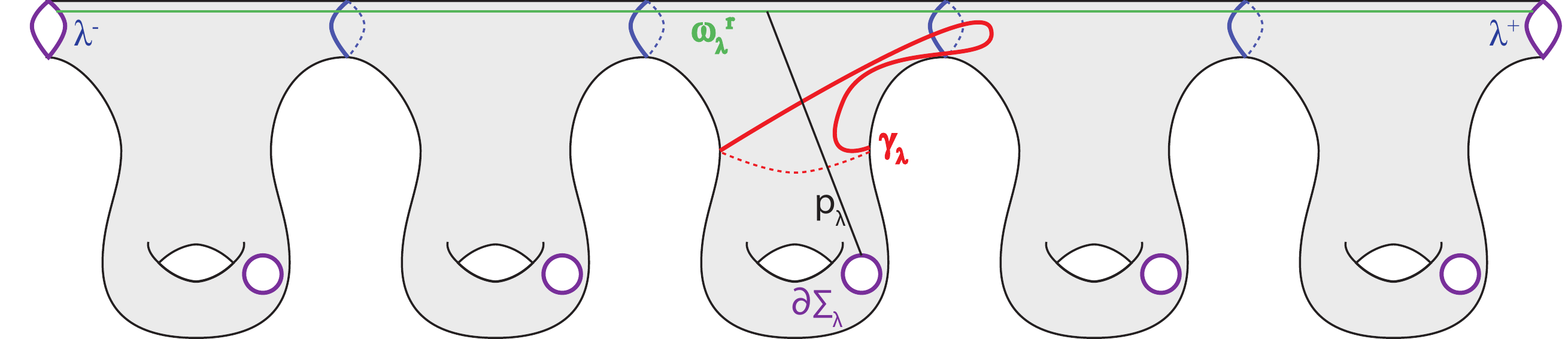}
  \caption{The restricted infinite cyclic cover and proof of Lemma~\ref{lem:non-null-homologous-lift}.
  Top: The surface~$\Sigma$ with boundary~$\partial \Sigma$. Bottom: The surface~$\Sigma_\lambda^r$;
  cycle~$\gamma_\lambda$ separates~$\partial \Sigma_\lambda$ from~$\lambda^-$.}
  \label{fig:restricted}
\end{figure*}

We would like to use the infinite cyclic cover to aid us in finding a shortest
non-contractible cycle.
As explained in Section~\ref{sec:lifting},
it is possible to consider only a finite
portion of~$\Sigma_\lambda$ if we choose $\lambda$ carefully.
We call this subset the \EMPH{restricted
infinite cyclic cover}.
Again, let~$\lambda$ be an arbitrary simple non-separating cycle in~$\Sigma$
and define~$\Sigma'$ as above with boundaries~$\lambda^+$ and~$\lambda^-$.
Instead of pasting together an infinite number
of copies of~$\Sigma'$, we only paste together five copies. Specifically,
we have a copy~$(\Sigma',i)$ of~$\Sigma'$ for each
integer~$i \in \Set{1,\dots,5}$.
Again, let~$(\lambda^+,i)$ and~$(\lambda^-,i)$
denote copies of~$\lambda^+$ and~$\lambda^-$ in~$(\Sigma',i)$.
The restricted infinite cyclic cover is defined by identifying~$(\lambda^+,i)$
and~$(\lambda^-,i+1)$ for every~$i \in \Set{1,\dots,4}$.
See Figure~\ref{fig:restricted}.
Now any graph~$G$
cellularly embedded on~$\Sigma$ lifts to a \emph{finite} graph~$G_\lambda^r$
embedded in~$\Sigma_\lambda^r$ with at most six times as many
vertices and edges.
Note that~$\Sigma_\lambda^r$ still has two lifts of~$\lambda$ acting as boundary
cycles.
We continue to refer to these boundary cycles
as~$\lambda^+$ and~$\lambda^-$ when it
is clear from context that we are referring to the restricted infinite
cyclic cover.
Euler's formula implies the genus of $\Sigma_\lambda^r$ is $5g - 5$.


Further restrict~$\lambda$ to be a simple non-separating cycle in~$G$.
For any path or cycle~$p$,
we define the
\EMPH{crossing count} $c_\lambda(p)$ to be the number of times~$p$
crosses~$\lambda$ from left to right \emph{minus} the number of times~$p$
crosses~$\lambda$ from right to left. Equivalently, we have
$$c_\lambda(p) = \sum_{u \arcto v \in p} c_\lambda(u \arcto v)$$
where for any directed edge $u \arcto v$, we define $c_\lambda(u \arcto v)$
to be~$1$
if $u \arcto v$ \emph{enters} $\lambda$ from the left,~$-1$ if $u \arcto v$
\emph{leaves} $\lambda$ from the left,
and~$0$ otherwise.
We can define the restricted infinite cyclic cover using a \emph{voltage construction}~\cite[Chapters 2,4]{gt-tgt-01}
for combinatorial surfaces.
Let~$G_\lambda^r$ be the graph whose vertices are the pairs~$(v,i)$,
where~$v$ is a vertex of~$G$ and~$i$ is an integer in~$\Set{1,\dots,6}$
if~$v$ lies along~$\lambda$ or~$\Set{1,\dots,5}$ if~$v$ does not lie
along~$\lambda$.
The edges of~$G_\lambda^r$ are the ordered pairs
$$(u \arcto v,i) := (u,i) \arcto (v,i+c_\lambda(u\arcto v))$$
for all edges~$u \arcto v$ of~$G$ and all~$i \in \Set{1,\dots,6}$.
Let~$\pi : G_\lambda^r \to G$ denote the obvious covering map~$\pi(v,i) = v$.
We declare that a cycle in~$G_\lambda^r$ bounds a face of~$G_\lambda^r$ if and
only if its projection to~$G$ bounds a face of~$G$. The resulting embedding
of~$G_\lambda^r$ defines the restricted infinite cyclic
cover~$\Sigma_\lambda^r$.

\section{Lifting Shortest Non-contractible Cycles}
\label{sec:lifting}

Consider the following procedure also used in~\cite{e-sncds-11}.
We construct a
greedy tree-cotree decomposition~$(T,L,C)$ of~$G$, where~$T$ is a shortest
path tree rooted at some arbitrary vertex of~$G$. Euler's formula implies
that~$L$ contains exactly~$2g$ edges; label these edges arbitrarily
as~$u_1v_1,u_2v_2,\dots,u_{2g}v_{2g}$. For each index~$i$, let~$\lambda_i$
denote the unique cycle in the \emph{undirected} graph~$T \cup u_i v_i$
oriented so that is contains the directed edge~$u_i \arcto v_i$. If
there are no boundary cycles in~$\Sigma$, then the set of
cycles~$\Lambda=\Set{\lambda_1,\lambda_2,\dots,\lambda_{2g}}$ is a basis
for the first homology group of~$\Sigma$~\cite{e-dgteg-03}. We
refer to the construction as a \EMPH{partial homology basis}.
Every non-separating cycle in~$\Sigma$ crosses at least one cycle
in~$\Lambda$ an odd number of times~\cite[Lemma 3]{cm-fsnsn-07}. The
greedy tree-cotree decomposition~$(T,L,C)$ can be constructed in~$O(n \log n)$
time using Dijkstra's algorithm.
Afterward, we can easily compute the partial homology basis in~$O(gn)$ time.

Recall that a single period lift of a cycle~$\gamma$ to a covering space refers to any lift
of a loop equivalent to~$\gamma$. Let~$\sigma$ be an arbitrary shortest path in~$G$.
Erickson~\cite{e-sncds-11} argues that the lift of any shortest
non-contractible cycle to the universal cover does not intersect many lifts of~$\sigma$.
This observation applies to the infinite cyclic cover as well.
The following lemma and its corollary
are essentially equivalent to Lemma 4.6 and Corollary 4.7 of~\cite{e-sncds-11}, but
modified for our setting.

\begin{lemma}
  \label{lem:intersect-paths}
  Let~$\gamma$ be a shortest non-contractible cycle in~$\Sigma$;
  let~$\lambda$ be any simple non-separating cycle in~$\Sigma$;
  and
  let~$\sigma$ be any shortest path in~$\Sigma$. Any single-period lift
  of~$\gamma$ to
  the infinite cyclic
  cover~$\Sigma_{\lambda}$ intersects at most two lifts of~$\sigma$.
\end{lemma}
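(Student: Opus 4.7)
The plan is to adapt Erickson's proof of Lemma~4.6 in~\cite{e-sncds-11}, which establishes the analogous bound for the universal cover, to our setting of the infinite cyclic cover $\Sigma_\lambda$. I will argue by contradiction: assume some single-period lift $\tilde\gamma$ of $\gamma$ meets three distinct lifts $\tilde\sigma_1, \tilde\sigma_2, \tilde\sigma_3$ of $\sigma$, at points $p_1, p_2, p_3$ ordered along $\tilde\gamma$.

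First I would record two standing facts about $\Sigma_\lambda$. Since the covering map $\pi \colon \Sigma_\lambda \to \Sigma$ is a local isometry, every lift of a shortest path in $G$ is itself a shortest path in $G_\lambda$, so each $\tilde\sigma_i$ is a shortest path in $\Sigma_\lambda$; moreover, distinct lifts of the simple path $\sigma$ are pairwise disjoint in $\Sigma_\lambda$. Set $s_i := \pi(p_i)$; because $\pi$ is injective along $\tilde\gamma$, the three points $s_i$ are distinct elements of $\sigma \cap \gamma$, dividing $\gamma$ into three arcs and appearing in some linear order along $\sigma$.

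Next I would use the three intersection points to construct three candidate cycles in $\Sigma$: for each pair $(i,j)$, concatenate the arc of $\gamma$ between $s_i$ and $s_j$ avoiding the third intersection point with the unique subpath of $\sigma$ from $s_j$ back to $s_i$. A length accounting, exploiting that the three points $s_i$ lie in linear order along $\sigma$ so two of the three $\sigma$-subpaths telescope to form the third, shows that the total length of the three candidates is only $|\gamma| + 2\,|\sigma[s_{\tau(1)}, s_{\tau(3)}]|$ for the appropriate $\sigma$-order $\tau$. Balancing this with the fact that each replaced $\gamma$-subarc is at least as long as the corresponding $\sigma$-subpath (since subpaths of $\sigma$ realize shortest distances in $\Sigma$), I can conclude that at least one candidate $\gamma'$ satisfies $|\gamma'| < |\gamma|$.

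The final and most delicate step is to verify that this shorter candidate $\gamma'$ is still non-contractible, which would contradict the minimality of $\gamma$. This is the main obstacle, and it is where the argument diverges from Erickson's original: in the universal cover, every non-contractible cycle lifts to an open path, so contractibility of $\gamma'$ is easily ruled out; in $\Sigma_\lambda$, however, a non-contractible but $\lambda$-null-homologous cycle can still lift to a closed loop, so we need a finer invariant. To handle this, I would lift each of the three candidate cycles back to $\Sigma_\lambda$ through $\tilde\gamma$ and the $\tilde\sigma_i$, and track the endpoints of their single-period lifts relative to $\tilde\gamma$. Assuming all three candidates are contractible, each lifts to a closed loop in $\Sigma_\lambda$; chaining these closures together along $\tilde\gamma$ would force two of the three lifts $\tilde\sigma_1,\tilde\sigma_2,\tilde\sigma_3$ to agree under a deck translation of $\Sigma_\lambda$, contradicting our assumption that these lifts are distinct.
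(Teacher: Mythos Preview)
Your approach is much harder than the paper's, and the hard part of your sketch is not convincing as written.

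The paper's proof is four sentences. It exploits the fact that the universal cover $\tilde\Sigma$ factors through the infinite cyclic cover: there is a covering map $\tilde\Sigma \to \Sigma_\lambda$. Given a single-period lift $\tilde\gamma$ of $\gamma$ in $\Sigma_\lambda$, lift it once more to $\tilde\Sigma$; this is a single-period lift of $\gamma$ in the universal cover, so by Erickson's Lemma~4.6 it meets at most two lifts of $\sigma$ there. Now project back down to $\Sigma_\lambda$. Because the intermediate covering map $\tilde\Sigma \to \Sigma_\lambda$ is a function, two lifts of $\sigma$ in $\tilde\Sigma$ that project to \emph{distinct} lifts in $\Sigma_\lambda$ are themselves distinct in $\tilde\Sigma$. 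Hence the number of distinct lifts of $\sigma$ that $\tilde\gamma$ meets in $\Sigma_\lambda$ is at most the number met upstairs, which is at most two. That is the entire argument: you never reopen Erickson's proof, you just quote it as a black box in the universal cover and push the bound down.

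By contrast, you reprove Lemma~4.6 from scratch inside $\Sigma_\lambda$, and you correctly identify that the non-contractibility step is the obstacle, since $\Sigma_\lambda$ is not simply connected. But your resolution of that obstacle is not right as stated. You write that if all three candidate cycles were contractible, then ``two of the three lifts $\tilde\sigma_1,\tilde\sigma_2,\tilde\sigma_3$ agree under a deck translation of $\Sigma_\lambda$, contradicting our assumption that these lifts are distinct.'' In a regular cover, \emph{any} two lifts of $\sigma$ differ by a deck translation, so this is not a contradiction. What you would actually need is that two of the $\tilde\sigma_i$ are \emph{equal}; this can be extracted (if a candidate $\gamma_{ij}$ is contractible, its lift based at $p_i$ closes up, forcing the lifted $\sigma$-subpath starting at $p_j\in\tilde\sigma_j$ to end at $p_i\in\tilde\sigma_i$, whence $\tilde\sigma_i=\tilde\sigma_j$), but that is not what you wrote, and you also have to be careful that the third candidate's $\gamma$-arc need not lie inside a single-period lift when $\tilde\gamma$ is an open path. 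None of this care is needed if you simply pass to the universal cover first.
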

\begin{proof}
  The covering space~$\Sigma_\lambda$ is path connected, so it is
  itself covered by the \emph{universal cover}~$\tilde{\Sigma}$.
  Any single period lift of~$\gamma$ to~$\Sigma_\lambda$ in turn has one or
  more lifts in~$\tilde{\Sigma}$.
  Any one of these single period lifts of~$\gamma$ to~$\tilde{\Sigma}$
  intersects at most two lifts of~$\sigma$~\cite[Lemma 4.6]{e-sncds-11}.
  Covering maps are functions, so
  lifting from~$\Sigma_\lambda$ to~$\tilde{\Sigma}$ cannot decrease
  the number of intersecting lifts of~$\sigma$.
\end{proof}
\begin{corollary}
  \label{cor:crossing-limit}
  Let~$\Lambda$ be a partial homology basis in~$\Sigma$;
  let~$\lambda$ be any \emph{cycle} in~$\Lambda$; and
  let~$\gamma$ be a shortest non-contractible cycle in~$\Sigma$.
  Any single-period lift of~$\gamma$ to~$\Sigma_\lambda$ intersects at most
  four lifts of~$\lambda$.
\end{corollary}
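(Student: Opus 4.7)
The plan is to decompose $\lambda$ into two shortest paths plus one non-tree edge, and then apply Lemma~\ref{lem:intersect-paths} separately to each of those two shortest paths. Recall from the construction of $\Lambda$ that $\lambda = \lambda_i$ is the unique cycle in $T \cup u_iv_i$ containing the directed edge $u_i \arcto v_i$. Let $c$ be the lowest common ancestor of $u_i$ and $v_i$ in the shortest path tree $T$, and write $\alpha := T[c, u_i]$ and $\beta := T[c, v_i]$. Since subpaths of shortest paths are themselves shortest paths, both $\alpha$ and $\beta$ are shortest directed paths in~$G$, and as an undirected simple cycle, $\lambda$ is the union $\alpha \cup \beta \cup \{e\}$ where $e = u_iv_i$.

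The crucial structural observation is that every vertex of $\lambda$ lies on $\alpha$ or on $\beta$: the only vertices of the edge $e$ are its endpoints $u_i$ and $v_i$, which are already the endpoints of $\alpha$ and $\beta$ respectively. Lifting to $\Sigma_\lambda$, each lift of $\lambda$ is a cycle containing exactly one lift of $\alpha$, one lift of $\beta$, and one lift of $e$ glued together, and therefore every vertex of any lift of $\lambda$ lies on the corresponding lift of $\alpha$ or of $\beta$ sitting inside that lift of $\lambda$.

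Next I would argue that whenever the single-period lift $\tilde\gamma$ meets a lift $\tilde\lambda$ of $\lambda$, the intersection includes a vertex shared between $\tilde\gamma$ and $\tilde\lambda$. This holds because edges in the cellular embedding are interior-disjoint, so any two walks in $G_\lambda$ that meet must do so at a vertex or along a shared edge whose endpoints are vertices. By the previous observation, this shared vertex lies on the lift of $\alpha$ or on the lift of $\beta$ contained in $\tilde\lambda$. Consequently, for every lift of $\lambda$ hit by $\tilde\gamma$, the path $\tilde\gamma$ hits either the contained lift of $\alpha$ or the contained lift of $\beta$.

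Finally, I apply Lemma~\ref{lem:intersect-paths} twice: with the shortest path $\alpha$, giving at most two lifts of $\alpha$ met by $\tilde\gamma$, and with $\beta$, giving at most two lifts of $\beta$. Because $\alpha$ is a subpath of the simple cycle $\lambda$, distinct lifts of $\lambda$ in $\Sigma_\lambda$ contain distinct lifts of $\alpha$, and likewise for $\beta$. Summing, $\tilde\gamma$ meets at most $2 + 2 = 4$ distinct lifts of $\lambda$. The main subtlety is the combinatorial intersection argument in the third paragraph, pinning down that an intersection of walks in the embedded graph is always witnessed by a shared vertex on a lift of $\alpha$ or $\beta$; everything else is a direct double application of Lemma~\ref{lem:intersect-paths}.
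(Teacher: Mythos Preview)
Your proposal is correct and follows essentially the same approach as the paper: decompose $\lambda$ into two shortest-path arms of the tree cycle and apply Lemma~\ref{lem:intersect-paths} to each. The paper's proof is just the two-sentence version of yours (``every vertex of $\lambda$ belongs to one of two directed shortest paths; apply the lemma to each''); your additional paragraphs about intersections being witnessed at vertices and about distinct lifts of $\lambda$ containing distinct lifts of $\alpha$ and $\beta$ are reasonable elaborations of details the paper leaves implicit, not a different argument.
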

\begin{proof}
  Every vertex of~$\lambda$ belongs to one of two directed shortest paths.
  By Lemma~\ref{lem:intersect-paths},
  any single-period lift of~$\gamma$ intersects at most two lifts of either
  shortest path.
\end{proof}

Recall the restricted infinite cyclic cover defined in
Section~\ref{sec:cover} is constructed by pasting together \emph{five}
copies of the surface cut along the simple non-separating cycle~$\lambda$.
We immediately get the following lemma stating the restricted infinite cyclic cover is large
enough to contain a lift of any shortest non-contractible cycle.

\begin{lemma}
  \label{lem:fits}
  Let~$\Lambda$ be a partial homology basis in~$\Sigma$;
  let~$\lambda$ be any cycle in~$\Lambda$; and
  let~$\gamma$ be a shortest non-contractible cycle in~$\Sigma$.
  There exists a single period lift of~$\gamma$ to~$\Sigma_\lambda^r$.
\end{lemma}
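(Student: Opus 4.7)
My plan is to lift $\gamma$ first to the ordinary infinite cyclic cover~$\Sigma_\lambda$, use Corollary~\ref{cor:crossing-limit} to bound how many copies of $\Sigma' = \Sigma \setminus \lambda$ the lift spans, and then apply a deck transformation to slide that lift into the five copies that form~$\Sigma_\lambda^r$.

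More concretely, I would pick a basepoint $x_0$ on $\gamma$, view $\gamma$ as a loop at $x_0$, choose any preimage $\tilde{x}_0 = (x_0, j)$ in $\Sigma_\lambda$, and let $\tilde\gamma$ be the resulting single-period lift. By construction of $\Sigma_\lambda$, the preimage of~$\lambda$ consists exactly of the boundary curves $(\lambda, i)$ that separate $(\Sigma', i-1)$ from $(\Sigma', i)$; so every time $\tilde\gamma$ passes from one copy of~$\Sigma'$ into an adjacent copy, it crosses one of these lifts of~$\lambda$. Corollary~\ref{cor:crossing-limit} bounds the number of such crossings by four, so $\tilde\gamma$ is contained in some set of at most five consecutive copies $(\Sigma', j'), (\Sigma', j'+1), \ldots, (\Sigma', j'+k)$ with $k \le 4$.

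Next I would use the deck transformation $\tau$ of $\Sigma_\lambda \to \Sigma$ that shifts the copy index by one. Since $\tau$ commutes with the covering projection, any power of $\tau$ applied to $\tilde\gamma$ again projects to $\gamma$ and is therefore a single-period lift of~$\gamma$. Applying $\tau^{1-j'}$ produces such a lift contained entirely in the copies $(\Sigma', 1), \ldots, (\Sigma', 5)$. These five copies, glued along $(\lambda^+, i) \sim (\lambda^-, i+1)$ for $i \in \{1,2,3,4\}$, are by definition $\Sigma_\lambda^r$, and the identifications inherited from $\Sigma_\lambda$ agree with those identifications, so the translated lift transfers directly into $\Sigma_\lambda^r$.

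I do not anticipate a real obstacle; the one mildly delicate point will be justifying that the deck-shifted path is still a single-period lift of the same loop~$\gamma$, which follows immediately from $\pi \circ \tau = \pi$, and checking that the $\lambda$-vertices visited by the translated lift fall within the combinatorial index range $\{1, \ldots, 6\}$ prescribed in the definition of $G_\lambda^r$ — which is automatic since the lift uses only the five copies $(\Sigma',1), \ldots, (\Sigma', 5)$ and their six bounding positions of~$\lambda$.
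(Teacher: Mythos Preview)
Your proposal is correct and follows exactly the reasoning the paper intends: the paper states that Lemma~\ref{lem:fits} follows ``immediately'' from Corollary~\ref{cor:crossing-limit} together with the fact that $\Sigma_\lambda^r$ is built from five copies of~$\Sigma'$, and you have simply written out that immediate argument in full, including the (standard) deck-transformation shift. There is nothing substantively different between your approach and the paper's.
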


In fact,
we show below that $\gamma$ lifts to be a \emph{shortest non-contractible cycle} in $\Sigma_\lambda^r$ if~$\gamma$ is separating.
This statement actually holds for \emph{any} non-separating cycle $\lambda$ made of two shortest paths optionally connected by an edge.
In Lemma~\ref{lem:non-null-homologous-lift}, we explain that the correct
choice of~$\lambda$ guarantees the lift of $\gamma$ to be non-null-homologous.

We continue by noting that every shortest non-contractible cycle is
simple~\cite[Lemma 3]{ccl-fsncd-10}.
We show that if any shortest non-contractible cycle~$\gamma$ is
\emph{separating}, then it lifts to a cycle
in~$\Sigma_\lambda^r$ for any~$\lambda$ in the partial homology basis.
Recall the definition of the crossing count~$c_\lambda(\gamma)$.

\begin{lemma}
  \label{lem:lift-as-cycle}
  Let~$\lambda$ be any simple non-separating cycle in~$\Sigma$,
  and let~$\gamma$ be a loop in~$\Sigma$ with a lift in~$\Sigma_\lambda^r$.
  Then,~$\gamma$ lifts to a loop in~$\Sigma_\lambda^r$ if and only if
  $c_\lambda(\gamma) = 0$.
\end{lemma}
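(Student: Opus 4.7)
My plan is to unfold the voltage construction of $G_\lambda^r$ from the end of Section~\ref{sec:cover}: each lifted edge changes the second coordinate by exactly $c_\lambda(u \arcto v)$, so the second coordinate of a lift of $\gamma$ telescopes to $c_\lambda(\gamma)$, and the lemma reduces to asking when the lift closes up.

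To carry this out, I would first fix a basepoint $x$ of the loop $\gamma$ and write $\gamma$ as a concatenation of directed edges $e_1 \cdot e_2 \cdots e_k$ with $e_1$ starting and $e_k$ ending at $x$. By hypothesis, a single-period lift $\gamma'$ of $\gamma$ exists in $\Sigma_\lambda^r$; let $(x, i_0)$ be its initial vertex. Applying the voltage rule
$(u \arcto v, i) = (u, i) \arcto (v, i + c_\lambda(u \arcto v))$
inductively, after the first $j$ edges the lift sits at a vertex whose second coordinate is $i_0 + \sum_{\ell \le j} c_\lambda(e_\ell)$. Taking $j = k$ and using that $c_\lambda(\gamma)$ is by definition the sum of $c_\lambda$ over the edges of $\gamma$, the terminal vertex of $\gamma'$ is $(x, i_0 + c_\lambda(\gamma))$.

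The lift $\gamma'$ is a loop exactly when this terminal vertex equals the initial vertex $(x, i_0)$. Since both project to $x$ under the covering map $\pi$, equality in $G_\lambda^r$ is equivalent to equality of the second coordinates, i.e.\ $c_\lambda(\gamma) = 0$. This proves both implications simultaneously, so no separate forward/backward argument is needed.

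I do not anticipate a real obstacle. The one bookkeeping point is that each intermediate second coordinate along the lift must lie in the admissible range ($\Set{1,\dots,5}$ off $\lambda$, $\Set{1,\dots,6}$ on $\lambda$), but this is precisely the content of the hypothesis that $\gamma$ has a lift in $\Sigma_\lambda^r$, so the induction runs without any case analysis at $\lambda$-vertices or at the seams $(\lambda^+,i) = (\lambda^-,i+1)$.
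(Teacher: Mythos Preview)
Your proof is correct and is essentially the same argument as the paper's: both track how the ``level'' coordinate of the lift changes along $\gamma$ and observe it telescopes to $c_\lambda(\gamma)$. You phrase it via the voltage construction on $G_\lambda^r$, while the paper phrases it via the topological description (tracking which copy $(\Sigma',i)$ the lift sits in), but these are two descriptions of the same bookkeeping.
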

\begin{proof}
  Let~$\Sigma'$ be the surface~$\Sigma$ cut along~$\lambda$.
  By construction,~$\Sigma_\lambda^r$ is composed of five copies
  of~$\Sigma'$, denoted~$(\Sigma', i)$ for each integer~$i \in \Set{1,\dots,5}$.
  Each copy is separated by a lift of~$\lambda$.
  Consider a lift of~$\gamma$ 
  contained in~$\Sigma_\lambda^r$ which we denote~$\gamma_\lambda$.
  For every instance of~$\gamma$ crossing~$\lambda$ from left to right,
  there is an instance of~$\gamma_\lambda$ crossing a lift of~$\lambda$
  from~$(\Sigma',i)$ to~$(\Sigma',i+1)$ for some~$i$.
  Likewise, every time~$\gamma$ crosses~$\lambda$ from right to left,
  $\gamma_\lambda$ crosses~$\lambda$ from~$(\Sigma',i)$ to~$(\Sigma',i-1)$.
  If~$\gamma_\lambda$ begins in~$(\Sigma',i)$, then it
  ends at a copy of the same point in~$(\Sigma',i+c_\lambda(\gamma))$.
\end{proof}
\begin{lemma}
  \label{lem:crossing-count}
  Let~$\lambda$ be any simple non-separating cycle in~$\Sigma$
  and let~$\gamma$ be any simple separating cycle. We
  have~$c_\lambda(\gamma) = 0$.
\end{lemma}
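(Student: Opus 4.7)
The plan is to exploit the fact that $\gamma$ being simple and separating induces a consistent global labeling of the two sides of $\gamma$, and then to pair up the crossings of $\gamma$ and $\lambda$ so that their signed contributions to $c_\lambda(\gamma)$ cancel.

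First I would invoke orientability: because $\Sigma$ is orientable and $\gamma$ is a simple oriented closed curve, $\gamma$ admits a tubular neighborhood homeomorphic to an annulus (not a M\"obius band). The two boundary circles of this annulus correspond to two well-defined ``sides'' of $\gamma$, and together with the orientations of $\gamma$ and $\Sigma$ they can be consistently labeled \emph{left} and \emph{right} all the way around $\gamma$. Because $\gamma$ separates $\Sigma$ into two components, exactly one of them---call it $A$---lies on the left of $\gamma$ everywhere, and the other---call it $B$---lies on the right.

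Next I would analyze each transversal crossing $p$ of $\gamma$ with $\lambda$ individually. A local inspection of the rotation system around $p$, using the definition of ``entering/leaving $\lambda$ from the left/right'', shows that the sign with which $p$ contributes to $c_\lambda(\gamma)$ is determined entirely by the direction of $\lambda$'s transit at $p$: every crossing where $\lambda$ moves from $A$ to $B$ contributes the same sign (say $+1$) to $c_\lambda(\gamma)$, and every crossing where $\lambda$ moves from $B$ to $A$ contributes $-1$.

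Finally I would close with the observation that $\lambda\colon S^1 \to \Sigma$ is a closed cycle, so walking once around $\lambda$ must return to the same component of $\Sigma \setminus \gamma$; hence the number of $A\to B$ transitions equals the number of $B\to A$ transitions, and the per-crossing contributions cancel in pairs, giving $c_\lambda(\gamma) = 0$. The main subtlety will be the local sign analysis: I must verify that the sign contributed by each crossing is governed only by the direction of $\lambda$'s transit, and this is where the global consistency of the ``$A$-on-the-left'' labeling---guaranteed by the simplicity of $\gamma$ and orientability of $\Sigma$---is essential. The rest of the argument is a straightforward parity count along the closed loop $\lambda$.
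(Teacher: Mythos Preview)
Your proposal is correct and follows essentially the same approach as the paper: label the two components of $\Sigma\setminus\gamma$ as $A$ and $B$, argue that the sign each crossing contributes to $c_\lambda(\gamma)$ is determined by whether $\lambda$ transits $A\to B$ or $B\to A$ at that crossing, and then use that $\lambda$ is closed to balance the two counts. The only cosmetic difference is that the paper makes the ``swap'' explicit (when $\lambda$ crosses $\gamma$ left-to-right, $\gamma$ crosses $\lambda$ right-to-left), whereas you package this as a local sign analysis backed by orientability and the annular tubular neighborhood of $\gamma$; both are the same observation.
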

\begin{proof}
  Cycle~$\gamma$ separates~$\Sigma$ into two components denoted~$A$ and~$B$
  so that a path crossing~$\gamma$ exactly once starts in~$A$ and ends in~$B$
  if it crosses from left to right.
  Let~$x$ be an arbitrary point on~$\lambda$ and consider the loop~$\ell$
  equivalent to~$\lambda$ based at~$x$.
  Every time~$\ell$
  crosses~$\gamma$ from left to right, we see~$\ell$ goes from~$A$ to~$B$.
  Further~$\gamma$ crosses~$\ell$ once from right to left. Similarly,
  every time~$\ell$ crosses~$\gamma$ from right to left, we see~$\ell$ goes
  from~$B$ to~$A$ and~$\gamma$ crosses~$\ell$ once from left to right.
  Loop~$\ell$ must cross from~$A$ to~$B$ the same number of times
  it crosses from~$B$ to~$A$. Therefore,~$\gamma$ crosses~$\ell$ and~$\lambda$
  from right to left the same number of times it crosses left to right.
  By definition,~$c_\lambda(\gamma) = 0$.
\end{proof}

\begin{corollary}
  Let~$\Lambda$ be a partial homology basis in~$\Sigma$;
  let~$\lambda$ be any cycle in~$\Lambda$; and
  let~$\gamma$ be a shortest non-contractible cycle in~$\Sigma$.
  If~$\gamma$ is separating, then~$\gamma$ lifts to \emph{a loop}
  in~$\Sigma_\lambda^r$.
\end{corollary}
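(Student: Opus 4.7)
The plan is to combine the three preceding results---Lemma~\ref{lem:fits}, Lemma~\ref{lem:lift-as-cycle}, and Lemma~\ref{lem:crossing-count}---almost as a trivial composition, so this is essentially a bookkeeping proof rather than a creative one. First I would observe that since $\lambda$ lies in the partial homology basis $\Lambda$, it is built as the unique cycle in a tree plus one extra edge, and in particular it is a \emph{simple} cycle in $G$. Moreover, since $\Lambda$ is a basis for the first homology group, each $\lambda \in \Lambda$ is non-separating, so $\lambda$ satisfies the hypotheses required by Lemma~\ref{lem:lift-as-cycle} and Lemma~\ref{lem:crossing-count}.

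Next I would invoke the fact, cited in the paragraph just above the corollary, that every shortest non-contractible cycle in a directed surface graph is simple (Lemma~3 of \cite{ccl-fsncd-10}). Together with the assumption that $\gamma$ is separating, this lets me apply Lemma~\ref{lem:crossing-count} to the pair $(\lambda, \gamma)$ to conclude $c_\lambda(\gamma) = 0$.

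At this point I would appeal to Lemma~\ref{lem:fits}, which guarantees that $\gamma$ has a single-period lift to the restricted infinite cyclic cover $\Sigma_\lambda^r$. The existence of this lift is precisely the hypothesis needed by Lemma~\ref{lem:lift-as-cycle}, and since we have already shown $c_\lambda(\gamma) = 0$, the ``if'' direction of Lemma~\ref{lem:lift-as-cycle} gives us that $\gamma$ in fact lifts to a loop in $\Sigma_\lambda^r$, completing the proof.

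The only subtlety---and I do not regard it as a real obstacle---is making sure the hypotheses of each lemma line up cleanly: that $\lambda$ really is simple and non-separating (so Lemma~\ref{lem:crossing-count} applies), that a single-period lift exists (so Lemma~\ref{lem:lift-as-cycle} applies), and that the loop $\ell$ equivalent to $\gamma$ underlying the ``lift as a loop'' phrasing is unambiguous. Each of these follows immediately from the setup, so the corollary reduces to a two-line composition of the preceding lemmas.
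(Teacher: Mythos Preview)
Your proposal is correct and matches the paper's intended approach: the corollary is stated in the paper without an explicit proof precisely because it is an immediate composition of Lemma~\ref{lem:fits}, Lemma~\ref{lem:lift-as-cycle}, and Lemma~\ref{lem:crossing-count}, together with the simplicity of shortest non-contractible cycles. You have identified and chained these ingredients exactly as the paper expects.
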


We can finally show that if any shortest non-contractible cycle~$\gamma$ is
separating, then it actually lifts to a shortest non-contractible
cycle in~$\Sigma_\lambda^r$ for any~$\lambda$ in a partial homology basis.

\begin{lemma}
  \label{lem:contractible}
  Let~$\gamma_\lambda$ be a loop in~$\Sigma_\lambda^r$ that projects
  to a simple loop~$\gamma$ in~$\Sigma$. Loop~$\gamma_\lambda^r$
  is contractible if and only if~$\gamma$ is contractible.
\end{lemma}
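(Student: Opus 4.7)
One direction is immediate: any null-homotopy of $\gamma_\lambda$ inside $\Sigma_\lambda^r$ projects under the covering map $\pi\colon\Sigma_\lambda\to\Sigma$ to a null-homotopy of $\gamma = \pi\circ\gamma_\lambda$ in $\Sigma$, so $\gamma$ is contractible.

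For the converse, suppose $\gamma$ is contractible. Since $\gamma$ is simple, it bounds a closed disk $D\subseteq \Sigma$. Because $D$ is simply connected, the inclusion $D\hookrightarrow\Sigma$ lifts to a continuous injection $\tilde\phi\colon D\to\Sigma_\lambda$, uniquely determined once we choose a lift of some basepoint on $\partial D$; I pick this lift to lie on $\gamma_\lambda$, and unique path lifting then forces $\tilde\phi|_{\partial D}=\gamma_\lambda$. Thus $\tilde D := \tilde\phi(D)$ is a disk in the full infinite cyclic cover $\Sigma_\lambda$ with $\partial\tilde D = \gamma_\lambda$, and the whole task reduces to showing $\tilde D \subseteq \Sigma_\lambda^r$: then $\tilde\phi$ itself is a null-homotopy of $\gamma_\lambda$ inside $\Sigma_\lambda^r$.

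To prove this containment I would track the copy index $\iota(x)\in\mathbb{Z}$ of each point $x\in D$, defined by $\iota(x) = i_0 + c_\lambda(\alpha_x)$, where $\alpha_x$ is any path in $D$ from the basepoint to $x$ and $i_0$ is the index of the basepoint's lift. Simple-connectedness of $D$ makes $\iota$ well defined, and it records which copy of $\Sigma'$ contains $\tilde\phi(x)$. Since $\lambda$ is non-separating in $\Sigma$ we cannot have $\lambda\subseteq D$ (otherwise $\lambda$ would be contractible), so $\lambda\cap D$ is a (possibly empty) disjoint union of simple arcs whose distinct transverse endpoints lie on $\gamma$. These arcs partition $D$ into sub-disks on which $\iota$ is constant, and $\iota$ jumps by $\pm 1$ across each arc.

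The heart of the argument is the combinatorial claim that every such sub-disk has at least one sub-arc of $\gamma$ on its boundary. Because the arcs of $\lambda\cap D$ are pairwise disjoint in their interiors and meet $\gamma$ only at distinct transverse points, each boundary corner of a sub-disk has one arc-piece on one side and a $\gamma$-piece on the other; hence the boundary of any sub-disk is an alternating cycle of arc-pieces and $\gamma$-pieces, forcing the presence of at least one $\gamma$-piece. Therefore every value attained by $\iota$ inside $D$ is also attained on $\gamma_\lambda$, and since $\gamma_\lambda\subseteq\Sigma_\lambda^r$ constrains $\iota(\gamma_\lambda)\subseteq\{1,\dots,6\}$, the same bound holds throughout $\tilde D$, proving $\tilde D\subseteq\Sigma_\lambda^r$. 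The main obstacle is this alternation/combinatorial step, which genuinely uses the simplicity of both $\lambda$ and $\gamma$ together with the non-separating hypothesis; everything else is standard covering-space manipulation.
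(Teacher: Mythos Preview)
Your proposal is correct and follows the same overall strategy as the paper: first show that $\gamma_\lambda$ bounds a disk in the full infinite cyclic cover $\Sigma_\lambda$, then argue that this disk actually lies inside the restricted cover $\Sigma_\lambda^r$. The paper obtains the disk slightly differently---it lifts the null-homotopy to $\Sigma_\lambda$ via the homotopy lifting property, concludes $\gamma_\lambda$ is contractible there, and then invokes simplicity of $\gamma_\lambda$ to get a bounding disk---whereas you lift the bounding disk $D\subseteq\Sigma$ directly; both routes are fine.

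The one substantive difference is in the containment step. Your chord-decomposition argument (arcs of $\lambda\cap D$, sub-disks, alternation forcing a $\gamma$-piece on every sub-disk boundary) is correct and pleasantly explicit, but the paper dispatches this in one line: the disk $D\subseteq\Sigma_\lambda$ cannot contain any face outside $\Sigma_\lambda^r$, because $\partial D=\gamma_\lambda$ has no edges outside $\Sigma_\lambda^r$ to bound such a face. Implicitly this uses that each component of $\Sigma_\lambda\setminus\Sigma_\lambda^r$ is non-compact, so a compact disk whose boundary lies in $\Sigma_\lambda^r$ cannot swallow it. Your argument unpacks exactly why this works at the level of the combinatorics in $\Sigma$, which is more self-contained but not needed for the paper's purposes.
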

\begin{proof}
  Suppose~$\gamma_\lambda$ is contractible. There exists a homotopy~$h$
  from~$\gamma_\lambda$ to a constant map. The paths in~$h$ can be projected
  to~$\Sigma$, yielding a homotopy from~$\gamma$ to a constant map.
  Therefore,~$\gamma$ is contractible.

  Now, suppose~$\gamma$ is contractible. There exists a homotopy~$h$
  from~$\gamma$ to a constant map. There exists a unique homotopy~$h_\lambda$
  of~$\gamma_\lambda$ that lifts the paths in~$h$ to
  the infinite cyclic cover~$\Sigma_\lambda$~\cite[Proposition 1.30]{h-at-02}.
  Homotopy~$h_\lambda$ finishes with a constant map, so~$\gamma_\lambda$
  is contractible in~$\Sigma_\lambda$. Loop~$\gamma_\lambda$ must be
  simple to project to a simple loop~$\gamma$, so it bounds a disk~$D$
  in~$\Sigma_\lambda$. Disk~$D$ contains no faces outside of~$\Sigma_\lambda^r$,
  because~$\gamma_\lambda$ contains no edges outside of~$\Sigma_\lambda^r$ to
  bound those outside faces. Therefore,~$\gamma_\lambda$ bounds a disk ($D$)
  in~$\Sigma_\lambda^r$ implying~$\gamma_\lambda$ is contractible
  in~$\Sigma_\lambda^r$.
\end{proof}

\begin{lemma}
  \label{lem:shortest-non-contractible-lift}
  Let~$\Lambda$ be a partial homology basis in~$\Sigma$;
  let~$\lambda$ be any cycle in~$\Lambda$; and
  let~$\gamma$ be a shortest non-contractible cycle in~$\Sigma$.
  If~$\gamma$ is separating,
  then~$\gamma$ lifts to a shortest non-contractible
  cycle in~$\Sigma_\lambda^r$.
\end{lemma}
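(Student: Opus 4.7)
The plan is to combine the preceding two results with a projection-based minimality argument. By the corollary following Lemma~\ref{lem:crossing-count}, the separating cycle $\gamma$ lifts to a loop $\gamma_\lambda$ in $\Sigma_\lambda^r$, and since the covering map preserves edge weights, $|\gamma_\lambda| = |\gamma|$. Because shortest non-contractible cycles are simple~\cite[Lemma~3]{ccl-fsncd-10}, $\gamma$ is simple, and hence so is $\gamma_\lambda$; then Lemma~\ref{lem:contractible} gives that $\gamma_\lambda$ is non-contractible in $\Sigma_\lambda^r$. It remains to show that no strictly shorter non-contractible cycle exists in $\Sigma_\lambda^r$.

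Let $\mu_\lambda$ be a shortest non-contractible cycle in $\Sigma_\lambda^r$; again by~\cite[Lemma~3]{ccl-fsncd-10}, $\mu_\lambda$ is simple. Let $\mu := \pi(\mu_\lambda)$ be its projection to $\Sigma$, so $|\mu| = |\mu_\lambda|$. The key claim is that $\mu$ is non-contractible in $\Sigma$; once this holds, minimality of $\gamma$ gives $|\mu| \geq |\gamma|$, so $|\mu_\lambda| = |\mu| \geq |\gamma| = |\gamma_\lambda|$, finishing the proof.

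To prove the claim, suppose for contradiction that $\mu$ is contractible in $\Sigma$. The homotopy lifting property of the covering map $\pi\colon \Sigma_\lambda \to \Sigma$ lifts the null-homotopy of $\mu$ to one of $\mu_\lambda$ in the (unrestricted) infinite cyclic cover $\Sigma_\lambda$, so the simple loop $\mu_\lambda$ bounds a topological disk $D \subset \Sigma_\lambda$. We show, in the spirit of the proof of Lemma~\ref{lem:contractible}, that $D \subseteq \Sigma_\lambda^r$. Either boundary cycle $\lambda^+, \lambda^-$ separates $\Sigma_\lambda^r$ from the rest of $\Sigma_\lambda$, and its intersection with $D$ is a disjoint union of simple closed curves interior to $D$, each a subcycle of the simple cycle $\lambda^\pm$. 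If $D$ extends past one of these boundaries, then the entire cycle $\lambda^\pm$ sits inside $D$ and bounds a subdisk, making a lift of $\lambda$ contractible in $\Sigma_\lambda$. But each lift of the non-separating cycle $\lambda$ separates $\Sigma_\lambda$ into two infinite-genus pieces (carrying in addition any copies of $\partial\Sigma$), neither of which is a disk, contradicting contractibility. Hence $D \subseteq \Sigma_\lambda^r$ and $\mu_\lambda$ is contractible in $\Sigma_\lambda^r$, contradicting the choice of $\mu_\lambda$.

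The hardest part is precisely this last step: ensuring that the null-homotopy certifying contractibility of $\mu_\lambda$ in $\Sigma_\lambda$ cannot escape through the boundaries of $\Sigma_\lambda^r$, which reduces to the fact that lifts of non-separating cycles cannot bound disks in the infinite cyclic cover. A secondary, combinatorial subtlety arises when $\mu_\lambda$ traverses edges along $\lambda^\pm$; this is handled by the same observation used in the proof of Lemma~\ref{lem:contractible}, namely that $D$ cannot contain faces outside $\Sigma_\lambda^r$ because their boundary edges are not present in $\mu_\lambda$ to serve as part of $\partial D$.
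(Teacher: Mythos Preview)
Your proof is correct and is precisely the argument the paper leaves implicit: Lemma~\ref{lem:shortest-non-contractible-lift} is stated in the paper without proof, as an immediate consequence of the preceding corollary and Lemma~\ref{lem:contractible} together with a projection/minimality step, and that is exactly what you supply. You are in fact more careful than the paper on one point: Lemma~\ref{lem:contractible} is stated under the hypothesis that the \emph{projection} $\gamma$ is simple, which need not hold for $\mu=\pi(\mu_\lambda)$; you correctly rerun its disk-bounding argument using instead that $\mu_\lambda$ is simple (as a shortest non-contractible cycle in $\Sigma_\lambda^r$), which is all that step actually requires. Your topological argument that $D$ cannot cross $\lambda^\pm$ because a lift of $\lambda$ would then bound a subdisk is sound in spirit, though the phrase ``infinite-genus pieces'' is inaccurate when $g=1$; the combinatorial face argument you cite at the end (identical to the one in the proof of Lemma~\ref{lem:contractible}) is cleaner and already suffices by itself.
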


\section{Computing Shortest Non-contractible Cycles in Directed Graphs}
\label{sec:non-contractible-algorithm}

We now
describe our algorithm for computing a shortest non-contractible cycle.
We assume the surface has genus~$g \geq 1$. Otherwise, every non-contractible
cycle is non-null-homologous, and we can simply use the algorithm
given in Section~\ref{sec:non-null-homologous}.
Further,
we begin by assuming the surface has exactly one boundary cycle. Instances
where~$\Sigma$ has more than one boundary cycle or no boundary cycles
are handled as simple reductions to the one boundary cycle case given at the end of this section.

Let~$\partial \Sigma$ denote the one boundary cycle on~$\Sigma$.
We compute a partial homology basis
${\Lambda = \Set{\lambda_1, \lambda_2, \dots, \lambda_{2g}}}$
in~$O(n \log n + gn)$ time as described in Section~\ref{sec:lifting}.
The following lemma states that one of the cycles in the homology basis
can be used to build a restricted infinite cyclic cover that is useful
for our computation. Surprisingly, the boundary introduced by restricting
the infinite cyclic cover plays a key role in the proof of the lemma.

\begin{lemma}
  \label{lem:non-null-homologous-lift}
  Let~$\gamma$ be a shortest non-contractible cycle in~$\Sigma$.
  If~$\gamma$ is separating, then
  there exists a non-separating cycle~$\lambda \in \Lambda$ such
  that~$\gamma$ lifts to a shortest non-null-homologous cycle
  in the restricted infinite cyclic cover~$\Sigma_\lambda^r$.
\end{lemma}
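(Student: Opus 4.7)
The plan is to pick a specific $\lambda \in \Lambda$ tied to the homological structure of $B$, and then show the lift of $\gamma$ is non-null-homologous in $\Sigma_\lambda^r$ via a covering-map argument.

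First I would analyze the topology. Since $\gamma$ is separating and non-contractible, and the unique boundary $\partial\Sigma$ lies on one of its two sides (call it $A$), the other side $B$ has $\gamma$ as its only boundary and has genus $g_B \ge 1$; otherwise $B$ would be a disk and $\gamma$ would be contractible. By Lemma~\ref{lem:shortest-non-contractible-lift}, $\gamma$ lifts to a shortest non-contractible cycle $\gamma_\lambda$ in $\Sigma_\lambda^r$ for every $\lambda \in \Lambda$. Since every non-null-homologous cycle is non-contractible, it suffices to pick $\lambda$ for which $\gamma_\lambda$ is non-null-homologous; shortestness among non-null-homologous cycles then follows at no extra cost.

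Next I would choose $\lambda$. Because $g_B \ge 1$, $B$ contains a simple closed curve $\mu$ that is non-separating in $B$, hence in $\Sigma$. I plan to use that the mod-$2$ intersection pairing on $H_1(\Sigma; \Z_2) \cong \Z_2^{2g}$ is non-degenerate---true because the single boundary $\partial\Sigma$ is null-homologous in $\Sigma$. Since $[\mu] \ne 0$, expanding the functional $\,\cdot\,\mu$ in the basis $\Lambda$ forces some $\lambda \in \Lambda$ to satisfy $\lambda \cdot \mu \equiv 1 \pmod 2$; equivalently, $c_\lambda(\mu)$ is odd. I fix this $\lambda$.

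Finally I would show $\gamma_\lambda$ is non-null-homologous by contradiction. If $\gamma_\lambda$ bounded a compact subsurface $V \subseteq \Sigma_\lambda^r$ with $\partial V = \gamma_\lambda$ and no boundary component of $\Sigma_\lambda^r$ inside, then $V$ would miss every lift of $\partial\Sigma$ as well as $\lambda^+$ and $\lambda^-$, forcing $V \subseteq \pi^{-1}(B) \cap (\Sigma_\lambda^r \setminus (\lambda^+ \cup \lambda^-))$. On that open region $\pi$ is a local homeomorphism, because each interior $\lambda$-interface in $\Sigma_\lambda^r$ is glued on both sides. Compactness of $V$ gives properness, so $\pi|_V \colon V \to B$ would be a covering map; its boundary fiber $\gamma_\lambda \to \gamma$ has size one, forcing degree one, so $\pi|_V$ is a homeomorphism. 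Pulling $\mu$ back through $\pi|_V$ then produces a loop $\mu_V \subset V$ that is simultaneously a single-period lift of $\mu$ to $\Sigma_\lambda^r$, which is impossible since $c_\lambda(\mu)$ odd forces every single-period lift of $\mu$ to end in a layer different from its start. The main obstacle is precisely this covering-map step: one must carefully justify that $\pi$ is a local homeomorphism at interior $\lambda$-interfaces (where the two-sided gluing is essential) and then invoke the proper-local-homeomorphism criterion to upgrade $\pi|_V$ to a covering of degree one.
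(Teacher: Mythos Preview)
Your argument is correct and takes a genuinely different route from the paper's.  The paper argues directly: it chooses the same auxiliary cycle (their~$\omega$, your~$\mu$) inside the genus side~$B$ and the same~$\lambda\in\Lambda$ crossing it oddly, but then it builds an explicit witness that~$\gamma_\lambda$ separates two boundary components of~$\Sigma_\lambda^r$.  Concretely, it takes a path~$p$ in~$\Sigma$ from~$\partial\Sigma$ to~$\omega$ that avoids~$\lambda$, lifts it to~$p_\lambda$, and observes that the lift~$\omega_\lambda^r$ is an arc from~$\lambda^-$ to~$\lambda^+$ disjoint from~$\gamma_\lambda$; since~$p_\lambda$ crosses~$\gamma_\lambda$ oddly, $\gamma_\lambda$ must separate the chosen lift of~$\partial\Sigma$ from~$\lambda^-$.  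This is elementary (only crossing-number bookkeeping) and even identifies \emph{which} boundary components are separated.  Your approach instead assumes~$\gamma_\lambda$ is null-homologous, uses the proper--local-homeomorphism criterion to show the bounded side~$V$ is a degree-one cover of~$B$, and then derives a contradiction because~$\mu$ would acquire a closed lift despite~$c_\lambda(\mu)$ being odd.  Your route is more uniformly topological and avoids constructing the auxiliary path~$p$, at the cost of invoking heavier covering-space facts; it also shows in one stroke that \emph{every} closed lift of~$\gamma$ in~$\Sigma_\lambda^r$ is non-null-homologous, whereas the paper pins down a particular one.

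One step you pass over deserves a sentence of justification: the inference ``$V$ misses every lift of~$\partial\Sigma$ and~$\lambda^\pm$, forcing $V\subseteq\pi^{-1}(B)$.''  This is true but not immediate.  The clean way to see it is that $\pi$ is open on~$V$ (since $V$ avoids~$\lambda^\pm$), so $\pi(V)\setminus\gamma$ is open while $\pi(V)$ is closed; hence $\pi(V)$ is one of $A$, $B$, or $\Sigma$, and the first and third are ruled out because $\pi(V)\cap\partial\Sigma=\emptyset$.  Once $\pi(V)=B$ is established, your covering-map and degree-one arguments go through exactly as written.
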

\begin{proof}
 Every shortest non-contractible cycle is simple~\cite[Lemma 3]{ccl-fsncd-10}.
  So by assumption,~$\gamma$ is a simple separating cycle.
  There is exactly one boundary~$\partial \Sigma$,
  so~$\gamma$ bounds the closure~$A$ of a set of faces.
  The component~$A$ must have genus, or~$\gamma$ would bound a disk
  and be contractible.
  There exists a simple non-separating cycle~$\omega$ on~$\Sigma$ contained
  entirely within~$A$. Cycle~$\omega$ must cross some
  other cycle~$\lambda \in \Lambda$ an odd number of
  times~\cite[Lemma 3]{cm-fsnsn-07}.
  See Figure~\ref{fig:restricted}.
  Consider the infinite cyclic cover~$\Sigma_\lambda$ and its
  restriction~$\Sigma_\lambda^r$.

  Let~$p$ be a path in~$\Sigma$ from~$\partial \Sigma$ to~$\omega$ such
  that~$p$ does not cross~$\lambda$. Path~$p$ must exist, because~$\lambda$
  is non-separating. Further,~$p$ crosses~$\gamma$ an odd number of times.
  Let~$\partial \Sigma_\lambda$ be a lift
  of~$\partial \Sigma$ to~$\Sigma_\lambda$,
  and let~$p_\lambda$ be the lift of~$p$ to~$\Sigma_\lambda$ that begins
  on~$\partial \Sigma_\lambda$.
  Let~$\gamma_\lambda$ be a lift of~$\gamma$ to~$\Sigma_\lambda$ such
  that~$p_\lambda$ crosses~$\gamma_\lambda$ an odd number of times.
  By symmetry and Lemma~\ref{lem:shortest-non-contractible-lift}, we may
  assume~$\gamma_\lambda$ is a cycle in~$\Sigma_\lambda^r$.
  We note~$\gamma_\lambda$ is simple as it projects to simple cycle~$\gamma$.

  Suppose that~$\gamma_\lambda$ is separating.
  Let~$\omega_\lambda$ denote a lift of cycle~$\omega$ to~$\Sigma_\lambda$
  such that~$p_\lambda$ ends on~$\omega_\lambda$.
  Curve~$\omega_\lambda$ is not a cycle in~$\Sigma_\lambda$, because~$\omega$
  crosses~$\lambda$ an odd number of times in~$\Sigma$
  (see Lemma~\ref{lem:lift-as-cycle}).
  Therefore,~$\omega_\lambda$ is
  a simple infinite path
  that does not cross any lift of~$\gamma$.
  Let~$\omega_\lambda^r = \omega_\lambda \cap \Sigma_\lambda^r$.
  Path~$\omega_\lambda^r$ is a simple arc from~$\lambda^-$ to~$\lambda^+$
  in~$\Sigma_\lambda^r$ which does not cross~$\gamma_\lambda$.
  Path~$p$ does not
  cross~$\lambda$, implying that~$p_\lambda$ is a path
  in~$\Sigma_\lambda^r$ with endpoints on~$\partial \Sigma_\lambda$
  and~$\omega_\lambda^r$. Further,~$p_\lambda$ crosses~$\gamma_\lambda$
  an odd number of times, implying that~$\gamma_\lambda$
  separates~$\partial \Sigma_\lambda$ from~$\omega_\lambda^r$ and~$\lambda^-$.

  We see either~$\gamma_\lambda$ is non-separating or it separates a pair
  of boundary cycles.
  Therefore,~$\gamma_\lambda$ is non-null-homologous
  in~$\Sigma_\lambda^r$ .
  Lemma~\ref{lem:shortest-non-contractible-lift} implies~$\gamma_\lambda$
  is actually a shortest non-null-homologous cycle in~$\Sigma_\lambda^r$.
\end{proof}

In the above proof, it would actually be preferable
if~$\gamma_\lambda$ \emph{was} separating. In this case,
we could find~$\gamma_\lambda$ in~$O(g n \log n)$ time by
applying Lemma~\ref{lem:use-cyclic} along shortest paths
between~$\lambda^-$ and each lift of~$\partial \Sigma$.
As written, the lemma requires
us to apply the full algorithm of Section~\ref{sec:non-null-homologous}
in~$O(g^2 n \log n)$ time if we wish to find~$\gamma_\lambda^r$.

We now finish considering the case where~$\Sigma$ has one boundary cycle.
Applying lemmas~\ref{lem:shortest-non-contractible-lift} and~\ref{lem:non-null-homologous-lift},
we construct the restricted infinite cyclic cover~$\Sigma_\lambda^r$ and
find a shortest non-null-homologous cycle in~$\Sigma_\lambda^r$ once for
each cycle~$\lambda \in \Lambda$ using
the algorithm of Section~\ref{sec:non-null-homologous}.
This procedure gives us a
shortest non-contractible cycle in~$O(g^3 n \log n)$ time if any are separating.
We apply the algorithm
of Section~\ref{sec:non-null-homologous} (or Erickson's~\cite{e-sncds-11}
algorithm) once to~$G$ directly to account for the case where every
shortest non-contractible cycle is non-separating. All that remains is to consider the
cases where~$\Sigma$ has several boundary cycles or no boundary cycles.

\subsection{Surfaces with Several Boundary}

We now consider the case where~$\Sigma$ has~$b > 1$ boundary cycles.
We apply the algorithm of Section~\ref{sec:non-null-homologous} to find
any shortest non-contractible cycles that are non-null-homologous.
Next, we paste disks into all but one of the boundary cycles.
This transformation does not introduce any new non-contractible cycles,
because it does not remove any paths from any homotopies.
Further, it does not restrict the set of
non-contractible null-homologous cycles. Every such cycle~$\gamma$ still
separates a subset of faces (with genus) from the one remaining boundary
cycle. We now apply the algorithm as given for one boundary cycle to
find any shortest non-contractible cycles that happen to be
null-homologous.

\subsection{Surfaces without Boundary}

Finally, we extend our algorithm
to consider the case where~$\Sigma$ has no boundary.
We apply the algorithm of Section~\ref{sec:non-null-homologous}
to find any shortest non-contractible cycles that are non-null-homologous
(we can also apply Erickson's~\cite{e-sncds-11}
algorithm as every non-null-homologous cycle is also non-separating
on a surface without boundary).
We then perform the following reduction in case every shortest
non-contractible cycle is null-homologous.
We compute \emph{one} cycle~$\lambda$ of a greedy homology basis
using a greedy tree-cotree decomposition in~$O(n \log n)$ time
and reduce the problem of finding the shortest non-contractible
cycle for the surface~$\Sigma$ with genus~$g$ and
no boundary to
the \emph{same} problem on the larger surface~$\Sigma_\lambda^r$, which
has two boundary cycles and genus~$5g-5$.
Note that the shortest non-contractible cycle in~$\Sigma_\lambda^r$
may be non-separating.
The reduction is correct according to
Lemma~\ref{lem:shortest-non-contractible-lift}. We then apply the algorithm for several
boundary on the new surface~$\Sigma_\lambda^r$.
Using both extensions and the algorithm as given above, we get our desired theorem.

\begin{theorem}
  A shortest non-contractible cycle in a directed graph embedded on an
  orientable surface of genus~$g$ with~$b$ boundary cycles can be computed
  in~$O((g^3 + gb)n \log n)$ time.
\end{theorem}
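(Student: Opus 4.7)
The plan is to prove the theorem by verifying correctness and running time for the three cases laid out in the section: $\Sigma$ has exactly one boundary cycle, $\Sigma$ has $b>1$ boundary cycles, and $\Sigma$ has no boundary. In each case I would first dispatch the possibility that some shortest non-contractible cycle is non-null-homologous by calling the algorithm of Section~\ref{sec:non-null-homologous} directly on $G$, and then reduce the remaining (separating) case to the single-boundary case via a small topological modification.

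In the single-boundary case, running Section~\ref{sec:non-null-homologous} on $G$ costs $O((g^2+gb)n\log n)$ and returns the answer whenever some shortest non-contractible cycle is non-null-homologous. Otherwise, every shortest non-contractible cycle is separating, so I would construct a partial homology basis $\Lambda$ of $2g$ cycles as in Section~\ref{sec:lifting}, build the restricted infinite cyclic cover $\Sigma_\lambda^r$ for each $\lambda\in\Lambda$, and run the non-null-homologous algorithm on each. Lemma~\ref{lem:non-null-homologous-lift} guarantees that for some $\lambda$ the shortest non-contractible cycle $\gamma$ lifts to a shortest non-null-homologous cycle in $\Sigma_\lambda^r$; the returned cycle projects back to $\gamma$. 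Since $\Sigma_\lambda^r$ has $O(n)$ complexity, genus $5g-5$, and $O(1)$ boundary cycles, each of the $2g$ calls runs in $O(g^2 n\log n)$ time, yielding $O(g^3 n\log n)$ overall.

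For $b>1$, after the initial $O((g^2+gb)n\log n)$ call to Section~\ref{sec:non-null-homologous}, I would paste disks into all but one boundary cycle and invoke the single-boundary algorithm. This reduction introduces no new non-contractible cycles and preserves every null-homologous non-contractible cycle, since each such cycle separates the remaining boundary from a region of positive genus. The cost is $O(g^3 n\log n)$, so the total is $O((g^3+gb)n\log n)$. For $b=0$, an initial $O(g^2n\log n)$ call handles the non-null-homologous case; otherwise I would compute a single cycle $\lambda$ of a greedy homology basis and pass to $\Sigma_\lambda^r$, which has genus $5g-5$ and exactly two boundary cycles. Correctness of this lifting follows from Lemma~\ref{lem:shortest-non-contractible-lift}. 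Running the multi-boundary algorithm on $\Sigma_\lambda^r$ then costs $O(((5g-5)^3 + 2(5g-5))n\log n) = O(g^3n\log n)$, matching the desired bound.

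The main obstacle is bookkeeping rather than any deep calculation: I must verify that each reduction preserves shortest non-contractible cycles of the appropriate type, and that the infinite-cyclic-cover construction is applied at most once along the way so that the effective genus and boundary counts feeding the non-null-homologous subroutine remain $O(g)$ and $O(g+b)$. Once these invariants are established, the time bounds compose cleanly into $O((g^3+gb)n\log n)$.
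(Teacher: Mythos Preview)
Your proposal is correct and follows essentially the same approach as the paper: handle the non-null-homologous case directly via Section~\ref{sec:non-null-homologous}, then in the single-boundary case loop over the $2g$ cycles of a partial homology basis and invoke Section~\ref{sec:non-null-homologous} on each restricted infinite cyclic cover, with the multi-boundary and no-boundary cases reduced to the single-boundary case exactly as you describe. Your invocation of Lemmas~\ref{lem:shortest-non-contractible-lift} and~\ref{lem:non-null-homologous-lift} and your running-time bookkeeping match the paper's.

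One small remark: your closing worry that the cover construction be ``applied at most once'' is slightly off---in the $b=0$ case it is in fact nested twice (once to create $\Sigma_\lambda^r$, and then again inside the single-boundary routine on that cover)---but your explicit computation $O(((5g-5)^3 + 2(5g-5))n\log n)=O(g^3 n\log n)$ already absorbs this, so the bound is unaffected.
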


\section{Conclusions and Future Work}

We gave algorithms to compute shortest non-trivial cycles in both directed and undirected surface embedded graphs.
In undirected graphs, our algorithms find shortest non-contractible and non-null-homologous cycles in $2^{O(g+b)} n \log \log n$ time and shortest non-separating cycles in $2^{O(g)} n \log \log n$ time.
For directed graphs, our algorithms find shortest non-null-homologous cycles in $O((g^2 + gb)n \log n)$ time and shortest non-contractible cycles in ${O((g^3 + gb)n \log n)}$ time.

The most obvious question remaining is whether we can reduce these times further. In particular, it is natural to ask if we can compute a shortest non-contractible cycle in a directed surface graph in $O((g^2 + gb)n \log n)$ time, matching the algorithm of Cabello \etal~\cite{cc-msspg-07,cce-msspe-12} for undirected surface graphs. The main bottleneck appears to be the need to compute shortest non-null-homologous cycles in the restricted infinite cyclic cover. If the proof of Lemma~\ref{lem:non-null-homologous-lift} can be improved to show an appropriate arc or cycle of $\Sigma_\lambda^r$ is crossed an odd number of times by the lift of a shortest non-contractible cycle, then we can easily reduce the cost of searching each cover to $O(g n \log n)$. Another question is whether or not the $O(n \log \log n)$ running time achieved by Italiano \etal~\cite{insw-iamcmf-11} can be achieved in directed graphs and if its use requires lifting to subsets of the universal cover. 

\paragraph{Acknowledgments.} The author would like to thank Erin W. Chambers, Jeff Erickson, Amir Nayyeri for many helpful discussions as well as the anonymous reviewers for their suggestions.

\bibliographystyle{newabuser}
\bibliography{topology,data-structures,optimization}

\part*{Appendix}
\appendix

\section{Extending the Cyclic Double Cover}
\label{app:double-cover}
%
%
\begin{figure*}[t]
  \centering
  \includegraphics[width=2in]{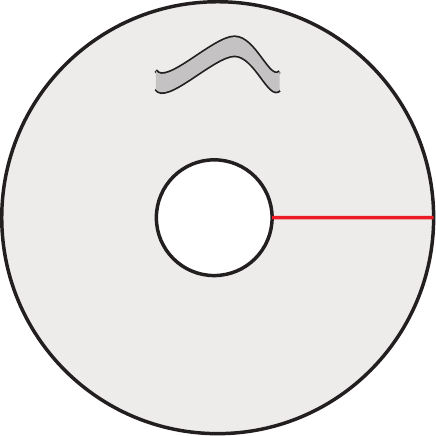} \quad
  \includegraphics[width=2in]{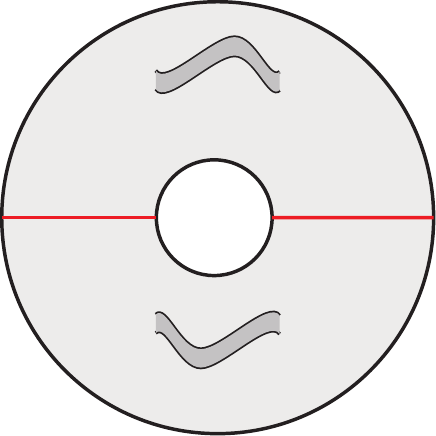}
  \caption{Left: An arc~$\lambda$ between two boundary on a torus.
  Right: The cyclic double cover~$\Sigma_\lambda^2$.}
  \label{fig:double-cover}
\end{figure*}

Let~$G$ be a symmetric directed graph with non-negative edge weights, cellularly
embedded on an orientable surface~$\Sigma$ of genus~$g$ with~$b$ boundary
cycles.
We describe an extension to the cyclic double cover of
Erickson~\cite{e-sncds-11} that works with simple arcs instead of
cycles. Let~$\lambda$ be an arbitrary simple non-separating arc in~$\Sigma$.

Define the covering space~$\Sigma_{\lambda}^2$, which we call the \EMPH{cyclic
double cover}\footnote{Named for the cyclic group of order~$2$.} as follows.
Cutting the surface~$\Sigma$ along~$\lambda$ gives
us a new surface~$\Sigma'$ with at least one boundary cycle.
One boundary cycle
of~$\Sigma'$ contains two copies of~$\lambda$ denoted~$\lambda^+$
and~$\lambda^-$.
Let~$(\Sigma', 0)$ and~$(\Sigma',1)$ denote two distinct copies of~$\Sigma'$.
For any point~$p \in \Sigma'$, let~$(p, 0)$ and~$(p, 1)$ denote the
corresponding points in~$(\Sigma', 0)$ and~$(\Sigma', 1)$, respectively.
In particular, let~$(\lambda^+, 0)$ and~$(\lambda^-, 0)$ denote the copies
of~$\lambda^+$ and~$\lambda^-$ in~$(\Sigma', 0)$.
Finally, let~$\Sigma_\lambda^2$
be the surface obtained by identifying~$(\lambda^+, 0)$ and~$(\lambda^-, 1)$ to
a single arc, denoted~$(\lambda,0)$, and identifying~$(\lambda^+, 1)$
and~$(\lambda^-, 0)$ to a single arc, denoted~$(\lambda, 1)$.
Any graph~$G$
that is cellularly embedded in~$\Sigma$ lifts to a graph~$G_\lambda^2$ with
twice as many vertices and edges that is cellularly embedded
in~$\Sigma_\lambda^2$. There are also twice as many faces in the embedding of $G_\lambda^2$ on
$\Sigma_\lambda^2$ and at least $2b -2$ boundary cycles, so Euler's formula implies the genus of
$\Sigma_\lambda^2$ is at most $2g$.
See Figure~\ref{fig:double-cover}.

For combinatorial surfaces, we can equivalently
define the cyclic double cover using
a standard \emph{voltage construction}~\cite[Chapters 2,4]{gt-tgt-01}.
Here, we assume~$\lambda$ is an arc in~$G$.
For any directed edge~$u \arcto v$, we define~$\varepsilon_\lambda(u \arcto v)$
to be~$1$ if~$u \arcto v$ enters~$\lambda$ from the left or leaves~$\lambda$
from the left,
and~$0$ otherwise.
Let~$G_\lambda^2$ be the graph whose vertices are the pairs~$(v,z)$, where~$v$
is a vertex of~$G$ and~$z$ is a bit, and whose edges are the ordered pairs
$$(u \arcto v, z) := (u, z) \arcto (v, z \oplus \varepsilon_\lambda(u \arcto v))$$
for all edges~$u \arcto v$ of~$G$ and both bits~$z$.
Here,~$\oplus$ denotes addition modulo~$2$.
Let ${\pi : G_\lambda^2 \to G}$
denote the obvious covering map~$\pi(v,z) = v$. We declare that a cycle
in~$G_\lambda^2$ bounds a face of~$G_\lambda^2$ if and only if its
projection to~$G$ bounds a face of~$G$. The resulting embedding of~$G_\lambda^2$
defines the cyclic double cover~$\Sigma_\lambda^2$.
For any directed cycle~$\gamma$, we define the
\EMPH{crossing parity}~$\varepsilon_\lambda(\gamma)$ to be~$1$ if~$\gamma$
crosses~$\lambda$ an odd number of times and~$0$ otherwise. Equivalently,
we have
$$\varepsilon_\lambda(\gamma)
  = \bigoplus_{u \arcto v \in \gamma} \varepsilon_\lambda(u \arcto v).$$

As in~\cite{e-sncds-11}, the following lemmas are immediate.

\begin{lemma}
  Let~$\lambda$ be any simple non-separating arc in~$\Sigma$;
  let~$\gamma$ be any cycle in~$\Sigma$;
  and let~$s$ be any vertex of~$\gamma$. Then~$\gamma$ is the
  projection of a unique path in~$\Sigma_\lambda^2$ from~$(s,0)$
  to~$(s,\varepsilon_\lambda(\gamma))$.
\end{lemma}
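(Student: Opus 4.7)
The plan is to prove the lemma by direct induction on the length of $\gamma$, relying on the explicit voltage construction of $G_\lambda^2$ given above rather than invoking the abstract path-lifting theorem for covering spaces. Since $\gamma$ is a closed walk in $G$ based at $s$, I would write it as a sequence of directed edges $s = v_0 \arcto v_1 \arcto \cdots \arcto v_k = s$.

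For existence, I would lift $\gamma$ edge by edge starting at $(s,0) = (v_0, 0)$. The voltage rule forces the lift of $v_{i-1} \arcto v_i$ to be the edge
$$(v_{i-1}, z_{i-1}) \arcto (v_i,\, z_{i-1} \oplus \varepsilon_\lambda(v_{i-1} \arcto v_i)),$$
so an immediate induction gives
$$z_i \;=\; \bigoplus_{j=1}^{i} \varepsilon_\lambda(v_{j-1} \arcto v_j).$$
Setting $i = k$ and comparing with the XOR-definition of $\varepsilon_\lambda(\gamma)$ displayed just above the lemma shows that the lifted walk ends at $(s, \varepsilon_\lambda(\gamma))$, as required.

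For uniqueness, I would observe that at every lifted vertex $(v, z)$ the covering map $\pi$ restricts to a bijection between the directed edges of $G_\lambda^2$ incident to $(v, z)$ and the directed edges of $G$ incident to $v$. Consequently, once the starting lift $(s,0)$ and the projected edge sequence of $\gamma$ are fixed, each successive lifted edge is forced, so the entire path in $G_\lambda^2$ is uniquely determined.

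I do not expect any substantive obstacle here; the statement is essentially a bookkeeping consequence of the voltage definition. The only point worth noting carefully is the identification between the inductive XOR accumulation along the lift and the topological crossing parity $\varepsilon_\lambda(\gamma)$, but that identity is exactly the one recorded in the display immediately preceding the lemma and so follows from associativity and commutativity of $\oplus$.
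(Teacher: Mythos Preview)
Your proof is correct. The paper does not give an explicit argument but simply declares the lemma immediate from the voltage construction (citing Erickson~\cite{e-sncds-11}); your edge-by-edge lift with the running XOR of the $\varepsilon_\lambda$ values is exactly the straightforward verification that underlies that claim.
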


\begin{lemma}
  Let~$\lambda$ be any simple non-separating arc in~$\Sigma$. Every lift of
  a shortest directed path in~$G$ is a shortest directed path in~$G_\lambda^2$.
\end{lemma}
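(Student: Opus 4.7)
The plan is a direct covering-space argument by contradiction. Take any shortest directed path $\sigma$ in $G$ from a vertex $u$ to a vertex $v$, and let $\sigma'$ be any lift of $\sigma$ to $G_\lambda^2$. By the voltage construction, the endpoints of $\sigma'$ have the form $(u,z)$ and $(v,z')$ for some bits $z, z'$, and the length of $\sigma'$ equals the length of $\sigma$, since each lifted edge $(u \arcto v, z)$ inherits the weight of $u \arcto v$. I would then assume for contradiction that $\sigma'$ is not a shortest directed path in $G_\lambda^2$ between its endpoints.

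Under this assumption, there is some strictly shorter directed path $\tau'$ in $G_\lambda^2$ from $(u,z)$ to $(v,z')$. Projecting via the covering map $\pi(v,z) = v$ yields a directed walk $\tau := \pi \circ \tau'$ in $G$ from $u$ to $v$ whose total length equals $|\tau'|$, because $\pi$ preserves edge weights. Thus $\tau$ is a walk in $G$ from $u$ to $v$ of total weight strictly less than $|\sigma|$. Since edge weights are non-negative, any such walk contains a directed path from $u$ to $v$ of no greater length, contradicting the minimality of $\sigma$.

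There is essentially no obstacle here; the entire argument hinges on the weight-preserving property of the covering map, which is immediate from the voltage definition of $G_\lambda^2$. The one subtlety worth noting is that the projection of a directed \emph{path} in $G_\lambda^2$ need not be a directed path in $G$, since two distinct lifts of the same vertex of $G$ collapse under $\pi$; however this is harmless, because the shortest-path length from $u$ to $v$ in $G$ equals the shortest-walk length when edge weights are non-negative, and that is all the contradiction requires.
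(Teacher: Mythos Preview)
Your argument is correct. The paper does not actually supply a proof of this lemma; it simply declares the statement ``immediate'' and cites Erickson~\cite{e-sncds-11}, so your standard covering-space contradiction (project a hypothetical shorter path down via the weight-preserving map $\pi$ to obtain a shorter walk in $G$) is exactly the intended one-line justification.
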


\begin{lemma}
  \label{lem:lift-to-double-cover}
  Let~$\lambda$ be any simple non-separating arc in~$\Sigma$;
  let~$\gamma$ be the shortest cycle in~$\Sigma$ that crosses~$\lambda$ an odd
  number of times;
  and let~$s$ be any vertex of~$\gamma$.
  Then~$\gamma$ is the projection of a shortest path in~$\Sigma_\lambda^2$
  from~$(s,0)$ to~$(s,1)$.
\end{lemma}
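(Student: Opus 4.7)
The plan is to combine the first of the two preceding lemmas with a matching reverse direction to obtain a weight-preserving bijection, and then to invoke the minimality hypothesis on~$\gamma$. First, because $\gamma$ crosses~$\lambda$ an odd number of times we have $\varepsilon_\lambda(\gamma) = 1$, so the preceding lemma lifts~$\gamma$ to a unique path~$\tilde\gamma$ in~$G_\lambda^2$ from~$(s,0)$ to~$(s,1)$. Since the covering map is a local isomorphism that preserves edge weights, $|\tilde\gamma| = |\gamma|$.

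For the reverse direction, let $P$ be an arbitrary directed path in~$G_\lambda^2$ from~$(s,0)$ to~$(s,1)$. Its projection $\pi(P)$ is a closed walk based at~$s$ in~$G$, hence a cycle in~$\Sigma$, and $|\pi(P)| = |P|$ because $\pi$ preserves weights. By the voltage construction, traversing an edge $(u,z) \arcto (v,\, z \oplus \varepsilon_\lambda(u \arcto v))$ of~$G_\lambda^2$ flips the bit coordinate precisely when $\varepsilon_\lambda(u \arcto v) = 1$, i.e., exactly when the projected edge enters or leaves~$\lambda$ from the left. Since the bit flips from~$0$ to~$1$ along~$P$, the total number of such edges in~$\pi(P)$ is odd, so $\pi(P)$ crosses~$\lambda$ an odd number of times.

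These two directions together give a length-preserving bijection between cycles in~$\Sigma$ through~$s$ with odd $\lambda$-crossing count and paths in~$G_\lambda^2$ from~$(s,0)$ to~$(s,1)$. By hypothesis $\gamma$ is the globally shortest cycle crossing~$\lambda$ an odd number of times, and it passes through~$s$; hence it is a fortiori the shortest such cycle through~$s$. Therefore~$\tilde\gamma$ is a shortest $(s,0)$-to-$(s,1)$ path in~$G_\lambda^2$, and $\gamma = \pi(\tilde\gamma)$ is its projection. There is no real obstacle beyond verifying that the voltage-construction accounting of bit flips agrees with the geometric notion of crossing~$\lambda$, which is immediate from the definition of~$\varepsilon_\lambda$ as the indicator of an edge entering or leaving~$\lambda$ from the left.
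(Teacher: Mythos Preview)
Your argument is correct: the length-preserving correspondence between closed walks through~$s$ with odd crossing parity and $(s,0)$-to-$(s,1)$ walks in the double cover, together with the global minimality of~$\gamma$, is exactly the standard reasoning. The paper does not give a proof of this lemma, stating only that it is immediate from the construction (as in~\cite{e-sncds-11}); you have simply written out that immediate argument in full.
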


\end{document}